\newcommand{\comments}[1]{}
\newcommand{\N}{\mathbb{N}}
\newcommand{\R}{\mathbb{R}}
\newcommand{\C}{\mathbb{C}}
\renewcommand{\Im}{\operatorname{\mathfrak{Im}}}
\renewcommand{\Re}{\operatorname{\mathfrak{Re}}}
\renewcommand{\tilde}{\widetilde}
\renewcommand{\epsilon}{\varepsilon}
\newtheorem{thm}{Theorem}
\newtheorem{prop}{Proposition}
\newtheorem{cor}{Corollary}
\newtheorem{lemma}{Lemma}
\newtheorem*{example*}{Example}
\theoremstyle{definition}
\theoremstyle{remark}
\newtheorem*{rmk}{Remark}
\newtheorem*{rmks}{Remarks}
\numberwithin{equation}{section}
\begin{document}

\title{Trace formulas for Schr\"odinger operators on star graphs}

\author{Semra Demirel-Frank}
\address{Semra Demirel-Frank, Mathematics 253-37, Caltech, Pasadena, CA 91125}
\email{sdemirel@caltech.edu}

\author{Laura Shou}
\address{Laura Shou, Mathematics 253-37, Caltech, Pasadena, CA 91125}
\email{lshou@caltech.edu}

\begin{abstract}
We derive trace formulas of the Buslaev-Faddeev type for quantum star graphs. One of the new ingredients is high energy asymptotics of the perturbation determinant.
\end{abstract}

\maketitle

\section{Introduction}

A \emph{quantum graph} is a triple $(\Gamma,H,\mathcal{VC})$, where $\Gamma$ is a metric graph with edges $\{e_j\}$, $H$ is a differential operator, and $\mathcal{VC}$ is a set of vertex conditions. 
The usual differential operator is the Schr\"{o}dinger operator defined by
\begin{equation}
H\psi=-\frac{d^2}{dx^2}\psi+V\psi,
\end{equation}
where $\psi\in L^2(\Gamma):=\bigoplus_j L^2(e_j)$ and $V$ is a real-valued potential on $\Gamma$, with restriction to edge $e_j$ denoted by $v_j:=V|_{e_j}$. We assume throughout that $V$ is sufficiently regular, for instance, in $L^1_{\text{loc},\text{unif}}$ on each edge. We can require continuity of $\psi$ on $\Gamma$ and impose Kirchoff boundary conditions on each vertex $v$,
\begin{equation}\label{eqn:kirchoff}
\sum_e\frac{d\psi}{dx_{e+}}(v)=0,
\end{equation}
where the sum is over all edges $e$ containing the vertex $v$, and the $e+$ indicates the derivative is taken in the outgoing direction. 
These conditions make $H$ a self-adjoint operator on $L^2(\Gamma)$. 

Besides being interesting mathematically, quantum graphs have many applications to chemistry and physics. Since the 1930s, they have been used to model structures ranging from nanotechnology and quantum wires to free electrons in organic molecules. Since quantum graphs have one dimensional edges, they are also used as simplified models for complicated behavior in physics, like quantum chaos and Anderson localization \cite{book}.

One of the simplest examples of a quantum graph is a quantum \emph{star graph}, which consists of a single vertex $v$ and $n\ge 2$ edges $e_j$, $1\le j\le n$, each of which is identified with the half-line $[0,\infty)$. 
\comments{
\begin{figure}[h!]
\begin{center}
\begin{tikzpicture}
\draw [fill] (0,0) circle (.075) node [below right] {$v$};
\draw [ultra thick, ->] (0,0)--(1.4,0);
\draw [ultra thick, ->] (0,0)--(1,1);
\draw [ultra thick, ->] (0,0)--(-.1,1.3);
\draw [ultra thick, ->] (0,0)--(-1.2,-.8);
\draw [ultra thick, ->] (0,0)--(0,-1.4);
\end{tikzpicture}
\end{center}
\caption{A star graph.}
\end{figure}
}
Because of its similarity to $n$ half-lines, many of the spectral theory and scattering theory results for the half-line case have analogues for star graphs.
For example, Levinson's formula for star graphs and expressions for the spectral shift function and perturbation determinant for star graphs were proved in \cite{demirel-ssf}. In this article, we will prove trace formulas of the Buslaev-Faddeev type for star graphs. We will first briefly review the scattering theory that is relevant to us here; see also \cite[\S4]{yafaev}. 
In the one dimensional case, the Schr\"{o}dinger equation is just
\[
-u''+V(x)u=zu,\quad z=\zeta^2.
\]
Consider a half-line edge $e_j$ in the star graph with
$$
\int_{e_j}|v_j(x)|\,dx<\infty.
$$
Then there are two linearly independent solutions to the Schr\"{o}dinger equation on $e_j$, the regular solution $\varphi_j$ and the \emph{Jost solution} $\theta_j$. The Jost solution is determined by the asymptotics,
\begin{align}
\theta_j(x,\zeta)&=e^{ix\zeta}(1+o(1)),\quad x\to\infty,\label{eqn:jost}\\
\theta_j'(x,\zeta)&=i\zeta e^{ix\zeta}(1+o(1)),\quad x\to\infty,\label{eqn:jost-derivative}
\end{align}
and gives rise to the \emph{Jost function},
$$
\omega_j(\zeta):=\theta(0,\zeta).
$$
In the half-line case and under suitable conditions on the potential, the (modified) perturbation determinant $D(\zeta)$ (Section~\ref{sec:known}) is simply given by $D(\zeta)=\omega(\zeta)$ and is analytic in $\zeta$ for $\Im\zeta>0$ and continuous up to $\Im\zeta=0$ except possibly at $\zeta=0$. For the star graph, the perturbation determinant is more complicated, but an explicit formula in terms of the Jost functions and solutions is given in \cite{demirel-ssf}, which we will use in Section~\ref{sec:known}.

Using the perturbation determinant, we can define the \emph{limit amplitude} $a(k)$ and \emph{limit phase} (or \emph{phase shift}) $\eta(k)$ for $\zeta=k\in\R$. We set $D(k)=:a(k)e^{i\eta(k)}$, with $a(k)=|D(k)|$. As we will see later in \eqref{eqn:D(z)-asym}, $D(z)=1+O(|\zeta|^{-1})$ as $|\zeta|\to\infty$, so we can choose $\eta$ with the convention $\eta(\infty)=0$. (We also require it to be continuous.)
The names for $a(k)$ and $\eta(k)$ come from scattering theory, where they correspond to the amplitude shift and phase shift in the wavefunction as $x\to\infty$.

The last thing we mention from scattering theory is zero energy resonances. The Schr\"{o}dinger operator $H$ has a \emph{zero energy resonance} if there is a nontrivial bounded solution to $-\frac{d^2}{dx^2}\psi+V\psi=0$ that satisfies the continuity and Kirchhoff conditions. The \emph{multiplicity of the resonance} is the dimension of the solution space. In physics, these resonances correspond to ``half-bound states'' or ``metastable bound states''.

Our main result is the following trace formulas of the type in \cite{BuslaevFaddeev} relating $\sum|\lambda_j|^{n}$ (for $n\in\frac{1}{2}\N$) to an expression involving the potential $V$.
\begin{thm}[trace formulas]\label{thm:trace}
Let $\Gamma$ be a star graph with edges $\{e_j\}$, $1\le j\le n$.
Assume that 
\begin{equation}\label{eqn:potential}
\int_{e_j}(1+x)|v_j(x)|\,dx<\infty,\quad 1\le j\le n,
\end{equation}
and that for each $j=1,\ldots,n$ and $m\in\N_0:=\N\cup\{0\}$,
\begin{equation}\label{eqn:potential-smooth}
|v_j^{(m)}(x)|\le C_m(1+|x|)^{-\rho-m},\quad \text{some }\rho\in(1,2].
\end{equation}
If $\zeta=0$ is a resonance of multiplicity one, we also assume that
\begin{equation}\label{eqn:potential2}
\int_{e_j}(1+x^2)|v_j(x)|\,dx<\infty,\quad 1\le j\le n.
\end{equation} 
Then letting $r_j$ be the multiplicity of the eigenvalue $\lambda_j$, we have
\begin{equation}
\sum_{j=1}^Nr_j|\lambda_j|^{1/2}-\pi^{-1}\int_0^\infty\log a(k)\,dk=\frac{L_1}{4}\equiv-\frac{1}{4}\sum_{j=1}^n\int_0^\infty v_j(x)\,dx.
\end{equation}
For $m=1,2,\ldots$,
\begin{equation}
\sum_{j=1}^Nr_j|\lambda_j|^m+(-1)^m\pi^{-1}2m\int_0^\infty\left(\eta(k)-\sum_{j=0}^{m-1}(-1)^{j+1}L_{2j+1}(2k)^{-2j-1}\right)k^{2m-1}\,dk=-m2^{-2m}L_{2m}
\end{equation}
\begin{multline}
\sum_{j=1}^Nr_j|\lambda_j|^{m+1/2}+(-1)^{m+1}\pi^{-1}(2m+1)\int_0^\infty\left(\log a(k)-\sum_{j=1}^m(-1)^jL_{2j}(2k)^{-2j}\right)k^{2m}\,dk\\
=(2m+1)2^{-2m-2}L_{2m+1}.
\end{multline}
The coefficients $L_m$ come from the asymptotic expansion $\log D(\zeta)=\sum_{m=1}^\infty L_m(2i\zeta)^{-m}$, for $|\zeta|\to\infty$. The first few coefficients are given by
\begin{multline}\label{eqn:Lm}
L_1=-\sum_{j=1}^n\int_0^\infty v_j(x)\,dx,\quad L_2=\sum_{j=1}^nv_j(0)\left[\frac{2}{n}-1\right],\quad
L_3=\sum_{j=1}^n\left(v_j'(0)\left[1-\frac{2}{n}\right]+\int_0^\infty v_j^2(x)\,dx\right),\hfill{}\\
L_4=\sum_{j=1}^n\left(v_j''(0)\left[\frac{2}{n}-1\right]-v_j^2(0)\left[\frac{2}{n}-2\right]\right)-\frac{2}{n^2}\left(\sum_{j=1}^n v_j(0)\right)^2,\hfill{}\\
L_5=\sum_{j=1}^n\left(v_j'''(0)\left[1-\frac{2}{n}\right]+v_j(0)v_j'(0)\left[\frac{8}{n}-6\right]\right)+\frac{4}{n^2}\left(\sum_{j=1}^nv_j(0)\sum_{j=1}^nv_j'(0)\right)-\sum_{j=1}^n\int_0^\infty(v_j'(x)^2+2v_j^3(x))\,dx.
\end{multline} 

\end{thm}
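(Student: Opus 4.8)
The plan is to recover the eigenvalue sums as the zeros of the perturbation determinant $D$ in the upper half-plane via the argument principle, and to trade the resulting contour integral for boundary integrals of $\log a$ and $\eta$ plus residue-type contributions of the coefficients $L_m$. First I would record the analytic input: $D(\zeta)$ is analytic for $\Im\zeta>0$ and continuous up to $\R\setminus\{0\}$; its only zeros in $\{\Im\zeta>0\}$ lie on the positive imaginary axis at $\zeta=i\kappa_j$ with $\lambda_j=-\kappa_j^2$ and multiplicity $r_j$ (negative eigenvalues, with no embedded or complex eigenvalues under \eqref{eqn:potential}); and $D$ obeys the symmetry $D(-\bar\zeta)=\overline{D(\zeta)}$, so that on the real axis $a$ is even and $\eta$ is odd. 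I would also use the high-energy expansion $\log D(\zeta)=\sum_m L_m(2i\zeta)^{-m}$, valid uniformly as $|\zeta|\to\infty$ in the closed upper half-plane; separating real and imaginary parts at $\zeta=k>0$ reproduces exactly the series subtracted inside the integrals, namely $\eta(k)\sim\sum_{j\ge0}(-1)^{j+1}L_{2j+1}(2k)^{-2j-1}$ and $\log a(k)\sim\sum_{j\ge1}(-1)^jL_{2j}(2k)^{-2j}$.

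Next I would integrate $\zeta^p D'(\zeta)/D(\zeta)$ over the contour $C_R$ consisting of $[-R,R]$ together with the semicircle $\Gamma_R=\{Re^{i\theta}:0\le\theta\le\pi\}$. By the argument principle the left-hand side equals $\sum_{\kappa_j<R}r_j(i\kappa_j)^p$, which for $p=2m$ is $(-1)^m\sum r_j|\lambda_j|^m$ and for $p=2m+1$ is $i(-1)^m\sum r_j|\lambda_j|^{m+1/2}$. On $[-R,R]$ I would write $d\log D=d\log a+i\,d\eta$ and invoke the parity of $a$ and $\eta$: since $d\log a$ is odd and $d\eta$ is even, for even $p$ only the $\eta$-integral survives and for odd $p$ only the $\log a$-integral survives, in each case folding to $2\int_0^R$. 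Integrating by parts then converts $\int_0^R k^p\,d(\cdot)$ into the integrals $\int_0^R(\cdot)\,k^{p-1}\,dk$ appearing in the statement, at the cost of boundary terms $R^p\,\eta(R)$ or $R^p\log a(R)$.

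The semicircle is handled by substituting the asymptotic expansion termwise. The term of order $m$ contributes $\tfrac{1}{2\pi i}L_m(2i)^{-m}(-m)\int_{\Gamma_R}\zeta^{\,p-m-1}\,d\zeta$; since $\int_{\Gamma_R}\zeta^{-1}\,d\zeta=i\pi$ while $\int_{\Gamma_R}\zeta^q\,d\zeta$ is a pure power of $R$ for $q\ne-1$, the single index $m=p$ (for which $q=p-m-1=-1$) produces the finite coefficient on the right-hand side (e.g. $iL_1/4$ in the $|\lambda_j|^{1/2}$ identity, dividing by the factor $i$ from the left), the indices $m>p$ vanish as $R\to\infty$, and the finitely many indices $m<p$ produce powers of $R$ that cancel exactly against the boundary terms $R^p\eta(R)$, $R^p\log a(R)$ and against subtracting the displayed asymptotic series from $\eta$ and $\log a$ inside the integrals. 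Matching these divergent pieces is the bookkeeping heart of the argument; letting $R\to\infty$ then yields the three formulas, with the parity dichotomy explaining why integer powers of $|\lambda_j|$ pair with $\eta$ and half-integer powers with $\log a$.

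The main obstacle I anticipate is twofold. The first is justifying the termwise semicircle estimate: this requires the expansion of $\log D$ to hold uniformly on the arc $\Gamma_R$, not merely along the real axis, so that after subtracting $p$ terms the remainder is $O(|\zeta|^{-p-1})$ and its contour integral vanishes — this uniform high-energy control of the perturbation determinant is exactly the new ingredient advertised in the abstract, and the step on which I would spend the most effort. The second is the behaviour at $\zeta=0$: when $0$ is a resonance of multiplicity one, $D$ degenerates there, so the contour must be indented by a small semicircle about the origin and one must show its contribution is negligible, which is precisely where the stronger decay \eqref{eqn:potential2} enters; in the non-resonant case the boundedness of $D$ near $0$ makes this step routine.
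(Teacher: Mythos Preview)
Your contour-integration strategy is sound and would lead to the trace formulas, but it is \emph{not} the route the paper takes. The paper follows the original Buslaev--Faddeev method: one integrates $\zeta^{2s}D'(\zeta)/D(\zeta)$ over $\Gamma_{R,\varepsilon}$ with a \emph{complex} parameter $s$ in the strip $0<\Re s<\tfrac12$, where all integrals converge absolutely, obtaining the single identity
\[
F(s)\sin(\pi s)-G(s)\cos(\pi s)=\frac{\pi}{2s}\sum_j r_j|\lambda_j|^s,
\qquad F(s)=\int_0^\infty\log a(k)\,k^{2s-1}\,dk,\quad G(s)=\int_0^\infty\eta(k)\,k^{2s-1}\,dk.
\]
The formulas at integer and half-integer $s$ are then read off by analytically continuing $F$ and $G$; the subtracted asymptotic sums in the statement arise as the \emph{definition} of the continuation, and the right-hand sides $-m2^{-2m}L_{2m}$, $(2m+1)2^{-2m-2}L_{2m+1}$ come out as residues of $F$ and $G$. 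Your direct approach with a fixed integer exponent $p$ trades this analytic-continuation step for the explicit matching of divergent boundary terms against divergent semicircle contributions; that bookkeeping is doable but more delicate, and the paper's method avoids it entirely.

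Two concrete gaps to flag. First, your low-energy remark is off for star graphs: by Proposition~\ref{prop:low-energy} one has $D(\zeta)=c\zeta^{m-1}(1+o(1))$, so in the \emph{non}-resonant case $m=0$ the determinant behaves like $c\zeta^{-1}$ and is unbounded near $0$; the indentation is always needed (though its contribution still vanishes, since $\zeta^p D'/D\sim(m-1)\zeta^{p-1}$ is integrable for $p\ge1$). Second, the theorem includes the explicit formulas \eqref{eqn:Lm} for $L_1,\dots,L_5$, and these do not follow from the contour argument at all. The paper obtains them in Section~\ref{sec:high-energy} by writing $\log D(\zeta)=\log\bigl(K(\zeta)/(in\zeta)\bigr)+\sum_j\log\omega_j(\zeta)$, importing the half-line expansion of each $\log\omega_j$, and deriving a separate expansion for $\log\bigl(K(\zeta)/(in\zeta)\bigr)$ via a logarithmic-derivative recursion with remainder estimates (Appendix~\ref{appendix:remainders}); this is the actual ``new ingredient'' and your proposal does not yet outline how to carry it out.
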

\begin{rmks}\mbox{}
\begin{enumerate}[(i)]
\item These trace formulas are analogous to the trace formulas for the half-line case, which can be found in \cite[\S4.6]{yafaev}. The differences are the values of the $L_m$'s and the possibility of having eigenvalues of multiplicity greater than one. 
\item The additional requirement \eqref{eqn:potential2} in the case that $\zeta=0$ is a resonance of multiplicity one comes from the hypotheses in \cite[Prop.4.5]{demirel-ssf} for low energy asymptotics of the perturbation determinant.
\end{enumerate}
\end{rmks}

We emphasize that while we are requiring each $v_j$ to be smooth on $e_j$, we do not impose restrictions on the values of $v_j$ and its derivatives at the vertex. This raises the question of whether the vertex terms in the coefficients $L_m$ disappear if $V$ is smooth at the vertex. A natural notion of smoothness on $\Gamma$ is that for any two distinct $e_i$ and $e_j$, the function on $\R$ obtained by combining $v_i$ and $v_j$ is smooth. This is easily seen to be equivalent to $v_i^{(2k)}(0)=v_j^{(2k)}(0)$ for all $1\le i,j\le n$ and $v_i^{(2k+1)}(0)=0$ for all $1\le i\le n$. We shall see that while for the first few coefficients $L_m$ with $m$ odd the vertex terms cancel, they do not for $m$ even.

\begin{cor}
For $k\in\N_0:=\N\cup\{0\}$, consider the special case where $v_i^{(2k)}(0)=v_j^{(2k)}(0)=:v^{(2k)}(0)$ for all $1\le i,j\le n$ and $v_i^{(2k+1)}(0)=0$ for all $1\le i\le n$. Then
\begin{multline}
L_1=-\sum_{j=1}^n\int_0^\infty v_j(x)\,dx,\quad L_2=v(0)\left[2-n\right],\quad L_3=\sum_{j=1}^n\int_0^\infty v_j^2(x)\,dx,\hfill{}\\ L_4=v''(0)\left[2-n\right]-v^2(0)\left[4-2n\right],\quad L_5=-\sum_{j=1}^n\int_0^\infty(v_j'(x)^2+2v_j^3(x))\,dx.\hfill{}
\end{multline}
\end{cor}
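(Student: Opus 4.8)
The plan is to derive the Corollary directly from the explicit formulas for $L_1,\dots,L_5$ recorded in \eqref{eqn:Lm} by substituting the two vertex-smoothness hypotheses. Concretely, I would apply the following two reductions repeatedly: first, every factor of an odd-order vertex derivative vanishes, so $v_j'(0)=v_j'''(0)=0$ for all $j$; second, the even-order vertex derivatives are independent of the edge, so $v_j^{(2k)}(0)=v^{(2k)}(0)$ and hence $\sum_{j=1}^n v_j^{(2k)}(0)=n\,v^{(2k)}(0)$. The computation then splits cleanly according to the parity of $m$, matching the dichotomy already anticipated in the discussion preceding the Corollary.

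For odd $m$, the claim is that all vertex contributions drop out. This is immediate from inspection of \eqref{eqn:Lm}: $L_1$ has no vertex term at all; in $L_3$ the only vertex term is the multiple of $v_j'(0)$; and in $L_5$ every vertex term carries at least one odd-order derivative factor — either $v_j'''(0)$, or the product $v_j(0)v_j'(0)$, or the factor $\sum_{j=1}^n v_j'(0)$ appearing in the $n^{-2}$ cross term. Applying $v_j'(0)=v_j'''(0)=0$ annihilates each of these, leaving only the integral terms $\sum_{j=1}^n\int_0^\infty v_j^2$ for $L_3$ and $-\sum_{j=1}^n\int_0^\infty((v_j')^2+2v_j^3)$ for $L_5$, as asserted.

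For even $m$, the vertex terms survive but collapse onto a single edge value. In $L_2$ one replaces $\sum_{j=1}^n v_j(0)$ by $n\,v(0)$, so $L_2=n\,v(0)[2/n-1]=v(0)[2-n]$. In $L_4$ one substitutes $\sum_{j=1}^n v_j''(0)=n\,v''(0)$, $\sum_{j=1}^n v_j^2(0)=n\,v^2(0)$, and $(\sum_{j=1}^n v_j(0))^2=n^2 v^2(0)$; the factors of $n$ and $n^2$ cancel the explicit $1/n$ and $1/n^2$ weights. The only point requiring care — and the one place where I expect to have to track the arithmetic — is combining the two $v^2(0)$ contributions: the term $-v^2(0)[2-2n]$ coming from the $v_j^2(0)$ summand and the term $-2v^2(0)$ coming from $-\tfrac{2}{n^2}(n\,v(0))^2$ add to $-v^2(0)[4-2n]$, which together with $v''(0)[2-n]$ gives the stated value of $L_4$. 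No analytic input is needed beyond the formulas of Theorem~\ref{thm:trace}; the entire argument is bookkeeping, and the main (mild) obstacle is merely collecting these coefficients correctly.
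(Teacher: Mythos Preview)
Your proposal is correct and is exactly the approach the paper intends: the Corollary is stated without proof immediately after Theorem~\ref{thm:trace} precisely because it follows by direct substitution of the vertex-smoothness hypotheses into the formulas \eqref{eqn:Lm}, and your bookkeeping (including the combination of the two $v^2(0)$ terms in $L_4$) is accurate.
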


Let us comment the outline of this paper and on our approach to Theorem \ref{thm:trace}. There is a well-known strategy to obtain trace formulas which goes back to \cite{BuslaevFaddeev} (see, for instance, \cite{yafaev} for a textbook presentation) and which we will follow here. It consists in the following steps:
\begin{enumerate}
\item One integrates the perturbation determinant over a contour and takes limits in the form of the contour to obtain a family of identities.
\item One analytically continues these identities and evaluates them at certain points.
\end{enumerate}
A key step both in Steps 1 and 2 is to find a representation of the perturbation formula in terms of Jost solutions. This allows to prove bounds on the perturbation determinant which are necessary both to control the limit of the contour in Step 1 and to perform the analytic continuation in Step 2. In order to carry out this program we will rely on \cite{demirel-ssf}, which contains a useful formula for the perturbation determinant on a star graph, see Proposition \ref{prop:D(z)}, and gives the low energy asymptotics of the perturbation determinant, see Proposition \ref{prop:low-energy}. Using these results we can carry out Step 1 in a rather straightforward manner. For Step 2, however, we also need high energy asymptotics for the perturbation determinant, and that is our main technical result in this paper. While they also rely on the representation formula from \cite{demirel-ssf} they require an inductive procedure and careful remainder estimates. We provide those in Section \ref{sec:high-energy} and in Appendix \ref{appendix:remainders}.

In order to make this paper self-contained we review necessary results from the literature in Section~\ref{sec:known} and provide details for Steps 1 and 2 in Section~\ref{sec:trace}. As we already mentioned, Section \ref{sec:high-energy} contains the novel high energy asymptotics which lead, in particular, to the formulas for the coefficients $L_m$ from \eqref{eqn:Lm}. In Section~\ref{sec:real}, we use the star graph with $n=2$ to recover some of the results for the whole real line that are given in \cite[\S5]{yafaev}. In contrast to \cite[\S5]{yafaev} we also obtain formulas if $V$ is smooth away from a point, and we see explicitly the contribution to the trace formulas of the discontinuities of $V$ and its derivatives at this point.

\section{The perturbation determinant and other previous results}\label{sec:known}

The \emph{free} Schr\"{o}dinger operator is just the operator $H_0:=-\frac{d}{dx^2}$, i.e. there is no potential $V$. The corresponding resolvent will be denoted by $R_0(z):=(H_0-z)^{-1}$, while the resolvent for $H$ will be denoted by $R(z):=(H-z)^{-1}$. The \emph{perturbation determinant} (PD), introduced by Krein in 1953,
produces a holomorphic function on the resolvent set $\rho(H_0)$ that is determined by the pair of operators $H,H_0$.  In our case, we will actually need to look at the \emph{modified} perturbation determinant (see \cite{yafaev1,yafaev} for details). It is closely related to the spectral shift function, which has applications to many areas, including spectral theory, scattering theory, and trace formulas \cite{app-ssf}. The modified perturbation determinant, which we will just call the perturbation determinant, is given by
\[
D(\zeta):=\det(\mathbbm{1}+\sqrt{V}R_0(\zeta^2)\sqrt{|V|}),\quad\Im\zeta>0,
\]
where $\sqrt{V}:=(\operatorname{sgn} V)\sqrt{|V|}$. 
It is shown in \cite{demirel-ssf} that this is well-defined in our case.
Some useful facts about the perturbation determinant that can be found in e.g. \cite{yafaev1,yafaev} are as follows.
\begin{itemize}
\item $D(\zeta)$ is holomorphic in $\Im\zeta>0$.
\item $D(\zeta)$ has a zero in $\zeta$ of order $r$ if and only if $\zeta^2$ is an eigenvalue of multiplicity $r$ of $H$.
\item $D^{-1}(\zeta)\frac{d}{d\zeta^2}D(\zeta)=\operatorname{tr}(R_0(\zeta^2)-R(\zeta^2))$.
\end{itemize}
\begin{rmk}
The perturbation determinant is sometimes defined with an argument of $z=\zeta^2$. Then the definition is $D(z):=\det(\mathbbm{1}+\sqrt{V}R_0(z)\sqrt{|V|})$ for $z\in\rho(H_0)$, and the last property listed above takes on a nicer form. We will however continue to use the definition with $\zeta$.
\end{rmk}

\begin{prop}[formula for the PD, \cite{demirel-ssf}]\label{prop:D(z)}
If the potential $V$ is a real-valued function on $\Gamma$ such that
\eqref{eqn:potential} holds, then for $\zeta^2=z\in\rho(H_0)$,
\begin{equation}\label{eqn:D(z)}
D(\zeta)=\frac{K(\zeta)}{in\zeta}\prod_{j=1}^n\omega_j(\zeta),\quad\Im\zeta>0,
\end{equation}
where $K(\zeta):=\sum_{j=1}^n\frac{\theta_j'(0,\zeta)}{\theta_j(0,\zeta)}$.
\end{prop}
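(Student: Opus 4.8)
The plan is to identify the claimed right-hand side, which I will call
$F(\zeta):=\frac{K(\zeta)}{in\zeta}\prod_{j=1}^n\omega_j(\zeta)$, with the perturbation determinant by showing that $F$ shares the two properties that pin down $D$: the logarithmic-derivative identity $\partial_z\log D=\operatorname{tr}(R_0(z)-R(z))$ (with $z=\zeta^2$) listed above, together with the normalization $D(\zeta)\to1$ as $|\zeta|\to\infty$ in $\{\Im\zeta>0\}$. Since two functions holomorphic in the upper half-plane with the same logarithmic derivative in $z$ and the same limit at infinity must coincide, it suffices to verify $\partial_z\log F=\operatorname{tr}(R_0-R)$ and $F(\zeta)\to1$. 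The normalization is immediate from the high-energy behavior of the Jost data: as $|\zeta|\to\infty$ one has $\omega_j(\zeta)=\theta_j(0,\zeta)\to1$ and $\theta_j'(0,\zeta)=i\zeta+O(1)$, so $K(\zeta)/(in\zeta)\to1$. The same computation with $v_j\equiv0$ (where $\theta_j(x,\zeta)=e^{i\zeta x}$, $\omega_j=1$ and $K=in\zeta$) gives $F\equiv1$, consistent with $D\equiv1$ when $H=H_0$.

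The heart of the argument is the explicit computation of $\operatorname{tr}(R_0-R)$, for which I would construct both resolvent kernels on $\Gamma$. For $\Im\zeta>0$ the solution admissible at infinity on $e_i$ is the Jost solution $\theta_i$ (since $e^{i\zeta x}$ decays), while admissibility at the vertex means the edge values form an $n$-tuple obeying continuity and the Kirchhoff condition \eqref{eqn:kirchoff}. I would build $R(z;x,y)$ by the Sturm-Liouville recipe adapted to the graph: for a source on $e_j$, take a multiple of $\theta_i$ on each edge $e_i\ne e_j$, a two-parameter combination on $e_j$, and fix all coefficients from (i) the unit jump of the $x$-derivative at $x=y$, (ii) continuity of the kernel at $x=y$, and (iii) the vertex conditions. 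The common denominator produced by this linear system is the vertex-matching determinant; reducing the continuity relations $c_i\omega_i=\text{(common vertex value)}$ into Kirchhoff shows this determinant is proportional to $\sum_i\theta_i'(0,\zeta)/\omega_i(\zeta)=K(\zeta)$, with the Jost normalizations contributing the product $\prod_i\omega_i$. This is the source of both the factor $K/(in\zeta)$ and the product $\prod\omega_j$, and it matches the observation that $\zeta^2$ is an eigenvalue exactly when $K(\zeta)=0$, i.e. when the vertex system admits a nontrivial decaying solution.

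To pass from the kernels to the trace I would integrate the diagonal difference $R_0(z;x,x)-R(z;x,x)$ over each edge and invoke the Green-type identity obtained by differentiating $-u''+v_iu=zu$ in the spectral parameter: $(z-z')\int_0^T u(\cdot,z)u(\cdot,z')\,dx=[\,u'(\cdot,z)u(\cdot,z')-u(\cdot,z)u'(\cdot,z')\,]_0^T$. In the coincidence limit $z'\to z$ each edge integral collapses to boundary data at $x=0$ and at $x=T\to\infty$. The contributions at infinity have the same leading behavior for $R$ and $R_0$ and cancel in the difference (this is where \eqref{eqn:potential} is used), leaving only vertex data expressed through $\omega_i=\theta_i(0,\zeta)$ and $\theta_i'(0,\zeta)$. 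Assembling these yields $\operatorname{tr}(R_0-R)=\partial_z\log F$, with the $\sum_j\partial_z\log\omega_j$ part arising from the per-edge (Dirichlet) terms and the $\partial_z\log\bigl(K/(in\zeta)\bigr)$ part from the vertex coupling; combined with the uniqueness step this gives $D=F$.

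I expect the main obstacle to be this resolvent/trace computation rather than the uniqueness argument: one must carry out the $n$-edge matching linear algebra carefully enough to see the matching determinant factor as $K\prod\omega_j$, and one must control the $x\to\infty$ boundary terms rigorously so that they cancel in the difference. An alternative that avoids the trace integral is to use multiplicativity of perturbation determinants, factoring $D$ into a bulk factor comparing the decoupled Dirichlet operator with potential to its free counterpart (contributing $\prod_j\omega_j$) and a finite-rank boundary factor comparing Kirchhoff to Dirichlet coupling (contributing $K/(in\zeta)$); this reproduces the same formula and serves as a useful consistency check.
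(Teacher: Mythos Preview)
This proposition is not proved in the paper; it is quoted verbatim from \cite{demirel-ssf} as an input result, so there is no in-paper argument to compare your proposal against.

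That said, your strategy is sound. The uniqueness step (same logarithmic derivative in $z$ plus same limit at infinity forces equality) is correct, and your outline of the resolvent construction on the star graph---Jost solutions on the outer legs, vertex matching producing $K(\zeta)$ as the coupling determinant---is the right picture. The Green-type identity you invoke to reduce the diagonal trace integral to boundary data is the standard mechanism, and you correctly flag the two genuine technical points: carrying out the $n\times n$ vertex linear algebra cleanly, and showing that the $x\to\infty$ boundary terms cancel in the difference $R_0-R$ under \eqref{eqn:potential}. Your alternative via multiplicativity of perturbation determinants (decoupled Dirichlet half-lines contributing $\prod_j\omega_j$, then a finite-rank boundary correction from Dirichlet to Kirchhoff contributing $K/(in\zeta)$) is actually the more efficient route and is closer in spirit to how Jost--Pais type formulas are usually established; the argument in \cite{demirel-ssf} proceeds by working directly with the Birman--Schwinger kernel using the explicit free resolvent on the star graph, which has a similar block-plus-coupling structure. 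Either path is viable, and nothing in your sketch is wrong---it is simply more work than the paper itself undertakes, since the paper treats the formula as already known.
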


The importance of this proposition is that it connects a spectral theoretic object, namely $D(\zeta)$, with ODE objects, namely the $\theta_j$'s. Results of this type go back to \cite{JoPa}; see also \cite{GeMiZi}.

\begin{rmk}
Although $D(\zeta)$ is not defined for $\Im\zeta=0$, we can use \eqref{eqn:D(z)} to extend it to $\Im\zeta=0$, $\zeta\ne0$.
Then for $\zeta=k\in\R\setminus\{0\}$, define $D(k)=:a(k)e^{i\eta(k)}$, with $a(k):=|D(k)|$. Either using general properties of $D(\zeta)$ or \eqref{eqn:D(z)} and the fact  $\theta_j(x,-k)=\overline{\theta_j(x,k)}$, we get,
\begin{equation}\label{eqn:even-odd}
D(-k)=\overline{D(k)},\quad \eta(-k)=-\eta(k),\quad a(-k)=a(k).
\end{equation}
\end{rmk}

\begin{prop}[low energy asymptotics, \cite{demirel-ssf}]\label{prop:low-energy}
Assume the potential $V$ satisfies \eqref{eqn:potential}. Let $m$ be the multiplicity of the resonance $\zeta=0$, with the convention that $m=0$ if $\zeta=0$ is not a resonance. If $\zeta=0$ is a resonance of multiplicity one, we also assume that $V$ has a second moment, i.e. \eqref{eqn:potential2} holds.
Then as $\zeta\to0$, 
\begin{equation}
D(\zeta)=c\zeta^{m-1}(1+o(1)),
\end{equation}
with $c\ne0$.
\end{prop}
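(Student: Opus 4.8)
The plan is to read off the asymptotics directly from the explicit formula of Proposition~\ref{prop:D(z)}. Writing $\omega_k(\zeta)=\theta_k(0,\zeta)$ and expanding $K(\zeta)$, the formula \eqref{eqn:D(z)} becomes
\begin{equation*}
D(\zeta)=\frac{1}{in\zeta}\,F(\zeta),\qquad F(\zeta):=\sum_{j=1}^n\theta_j'(0,\zeta)\prod_{k\ne j}\omega_k(\zeta),
\end{equation*}
so everything reduces to determining the order of vanishing of $F$ at $\zeta=0$ together with the nonvanishing of its leading coefficient.

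First I would collect the relevant half-line facts from the Volterra integral equation for the Jost solution. Under the first-moment hypothesis \eqref{eqn:potential} each $\theta_j(\cdot,\zeta)$ extends continuously to $\zeta=0$, the limit $\theta_j(\cdot,0)$ is real, and $\zeta=0$ is a (Dirichlet) resonance on $e_j$ precisely when $\omega_j(0)=0$, in which case $\theta_j'(0,0)\ne0$. Set $p:=\#\{j:\omega_j(0)=0\}$. The multiplicity $m$ of the resonance on $\Gamma$ is then found by solving the continuity and Kirchhoff conditions for a bounded zero-energy solution, which on each edge must be a multiple $c_j\theta_j(\cdot,0)$: continuity forces $c_j\omega_j(0)$ to be independent of $j$, while Kirchhoff reads $\sum_j c_j\theta_j'(0,0)=0$. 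A short case analysis gives $m=p-1$ when $p\ge1$, and for $p=0$ gives $m=1$ if $K(0)=0$ and $m=0$ otherwise.

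Next I would count orders of vanishing in $F$. Each factor $\omega_k(\zeta)$ vanishes to first order exactly when $e_k$ is a resonance edge, so the summand indexed by $j$ has order $p-1$ if $e_j$ is a resonance edge and order $p$ otherwise. Hence $F$ vanishes to order $p-1$ and
\begin{equation*}
D(\zeta)=\frac{1}{in\zeta}\bigl(A\,\zeta^{p-1}+o(\zeta^{p-1})\bigr)=c\,\zeta^{p-2}(1+o(1))=c\,\zeta^{m-1}(1+o(1)),
\end{equation*}
in agreement with the claimed exponent. In the exceptional case $p=0$, $K(0)=0$ the order-$0$ coefficient of $F$ equals $\bigl(\prod_k\omega_k(0)\bigr)K(0)=0$, so one passes to the order-$1$ coefficient, again reproducing $\zeta^{m-1}=\zeta^0$.

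The main obstacle is to prove $c\ne0$. The key is the Wronskian identity $W(\theta_j(\cdot,0),\partial_\zeta\theta_j(\cdot,0))=i$, obtained by observing that $\partial_\zeta\theta_j(\cdot,0)$ also solves the zero-energy equation and evaluating the (constant) Wronskian at $x\to\infty$ using \eqref{eqn:jost}--\eqref{eqn:jost-derivative}. Evaluated at $x=0$ on a resonance edge this yields $\theta_j'(0,0)\,\dot\omega_j(0)=-i$, where $\dot\omega_j(0):=\partial_\zeta\omega_j(0)$; in particular $\dot\omega_j(0)\ne0$, and the leading coefficient of $F$ reduces to a nonzero product times $i\sum_{j:\,\omega_j(0)=0}\theta_j'(0,0)^2$, which is nonzero because the $\theta_j'(0,0)$ are real and nonzero. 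The genuinely delicate point is the low-energy regularity of the Jost functions: one must justify that $\omega_j$ has a nonzero linear term at $\zeta=0$. At a resonance edge this already follows from the first moment, but extracting $\dot\omega_j(0)$ at a \emph{non}-resonance edge requires the second moment, which is exactly the situation in the multiplicity-one case $p=0$, $K(0)=0$ and explains the extra hypothesis \eqref{eqn:potential2}. I expect this regularity analysis, together with the uniform remainder control needed to make the expansion of $F$ rigorous, to be the hardest part of the argument.
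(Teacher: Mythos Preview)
The paper does not prove this proposition; it is quoted verbatim from \cite{demirel-ssf} (see the remark after Theorem~\ref{thm:trace} and the attribution in Section~\ref{sec:real}), so there is no in-paper argument to compare against. Your outline follows what is the natural route and, as far as one can tell, the route of \cite{demirel-ssf}: rewrite $D(\zeta)=\tfrac{1}{in\zeta}F(\zeta)$ from the product formula \eqref{eqn:D(z)}, classify the resonance multiplicity via the count $p=\#\{j:\omega_j(0)=0\}$ together with whether $K(0)=0$, and use the Wronskian identity $W(\theta_j(\cdot,0),\partial_\zeta\theta_j(\cdot,0))=i$ and the reality of $\theta_j(\cdot,0)$ to force the leading coefficient to be nonzero. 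The case analysis $m=p-1$ for $p\ge1$ and $m\in\{0,1\}$ for $p=0$ is correct.

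One point deserves tightening. You claim that on a resonance edge the derivative $\dot\omega_j(0)$ already exists under the first moment, so that the second-moment hypothesis \eqref{eqn:potential2} is needed only in the subcase $p=0$, $K(0)=0$. But the proposition imposes \eqref{eqn:potential2} for \emph{every} $m=1$ case, and $m=1$ also arises when $p=2$. More importantly, your method for $p\ge3$ (where $m\ge2$ and only \eqref{eqn:potential} is assumed) hinges on expanding each $\omega_j(\zeta)$ to first order at a resonance edge; the Wronskian identity you invoke already presupposes that $\partial_\zeta\theta_j(\cdot,0)$ exists, which is the standard consequence of the \emph{second} moment, not the first. So either you need a sharpened regularity statement at resonance edges valid under the first moment alone, or the expansion of $F$ must be organized differently (for instance through the regular solution $\varphi_j$, which is entire in $\zeta$). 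You rightly flag low-energy regularity as the crux, but the particular split you propose between first- and second-moment hypotheses does not match the statement as written and leaves the $p\ge3$ case without a complete justification.
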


Proposition \ref{prop:D(z)} also yields the leading order of the high energy asymptotics of $D(\zeta)$. In fact, from asymptotics for the half-line, $\omega_j(\zeta)=\theta_j(0,\zeta)=1+O(|\zeta|^{-1})$ and $\theta'(0,\zeta)=i\zeta+O(1)$ as $|\zeta|\to\infty$, we have
\begin{equation}
K(\zeta)=ni\zeta+O(1),\quad|\zeta|\to\infty,\quad\Im\zeta\ge0,
\end{equation}
which implies
\begin{equation}\label{eqn:D(z)-asym}
D(\zeta)=\frac{ni\zeta+O(1)}{in\zeta}\prod_{j=1}^n(1+O(|\zeta|^{-1}))=1+O(|\zeta|^{-1}),
\end{equation}
which is the same limiting behavior as in the half-line case.
To get the coefficients $L_m$ in Theorem~\ref{thm:trace}, we need a full asymptotic expansion for $K(\zeta)$, which we will compute in Section~\ref{sec:high-energy}.

\section{Adapting results from the half-line case}\label{sec:trace}

Trace formula derivations for the half-line case can be found in \cite[\S4.6]{yafaev}. We will follow the same method here, but with some adaptations to ensure the results hold for star graphs. We will assume we know the coefficients $L_m$ in the asymptotic expansion of $\log D(\zeta)$, which will result in proving Theorem~\ref{thm:trace} except for the formulas \eqref{eqn:Lm} for the $L_m$'s. (The expressions for the $L_m$'s will be derived in Section~\ref{sec:high-energy}.)
\begin{lemma}
Assume \eqref{eqn:potential} holds, and also \eqref{eqn:potential2} holds if $m=1$. For $s\in\C$, $0<\Re s<\frac{1}{2}$, set
\[
F(s):=\int_0^\infty\log a(k)k^{2s-1}\,dk,\quad G(s):=\int_0^\infty\eta(k)k^{2s-1}\,dk.
\]
Then
\begin{equation}\label{eqn:FG}
F(s)\sin(\pi s)-G(s)\cos(\pi s)=\frac{\pi}{2s}\sum_{j=1}^Nr_j|\lambda_j|^s,
\end{equation}
where $r_j$ is the multiplicity of the eigenvalue $\lambda_j$.
\end{lemma}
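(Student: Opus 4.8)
The plan is to carry out Step~1 of the Buslaev--Faddeev scheme described in the introduction, rotating a contour integral of $\log D$ from the real to the imaginary axis. I introduce the single analytic object
\[
\Phi(s):=\int_0^\infty \log D(k)\,k^{2s-1}\,dk=F(s)+iG(s),
\]
where I have used $\log D(k)=\log a(k)+i\eta(k)$. The strip $0<\Re s<\tfrac12$ is exactly the range of convergence: near $k=\infty$, \eqref{eqn:D(z)-asym} gives $\log D(k)=O(k^{-1})$, integrable against $k^{2\Re s-1}$ precisely when $\Re s<\tfrac12$, while near $k=0$ Proposition~\ref{prop:low-energy} gives $\log D(k)=(m-1)\log k+O(1)$, integrable when $\Re s>0$.

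Next I rotate the contour to the positive imaginary axis. Since $\zeta^2$ is an eigenvalue only for $\zeta$ on the positive imaginary axis (the eigenvalues $\lambda_j<0$ correspond to $\zeta=i\kappa_j$ with $\kappa_j=|\lambda_j|^{1/2}$), the function $\log D(\zeta)\zeta^{2s-1}$ is analytic in the open first quadrant, and I apply Cauchy's theorem on the quarter-disk contour, indented to the right around each $i\kappa_j$. Both arcs vanish in the limit: on the large arc $|\zeta|=R$, $\log D=O(R^{-1})$ and $\Re s<\tfrac12$ force the contribution to $0$; on the small arc $|\zeta|=\epsilon$, Proposition~\ref{prop:low-energy} gives $|\log D(\zeta)|=O(|\log\epsilon|)$ and $\Re s>0$ again force it to $0$. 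The small semicircles around the $i\kappa_j$ contribute nothing, since $\log D$ has only an integrable logarithmic singularity there; their sole effect is to fix the branch of $\log D$ continued from the real axis.

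The key computation is then on the imaginary axis. Using that $D(i\kappa)=\det(\mathbbm{1}+\sqrt V R_0(-\kappa^2)\sqrt{|V|})$ is real for $\kappa>0$, I write $\log D(i\kappa)=\log|D(i\kappa)|+i\vartheta(\kappa)$, where $\vartheta(\kappa)=\Im\log D(i\kappa)$ is the phase continued from $\vartheta(+\infty)=0$. Since $D(i\kappa)$ is real and nonzero off the points $\kappa_j$, $\vartheta$ is piecewise constant in $\pi\Z$, and tracking $\arg(\zeta-i\kappa_j)$ across each zero shows that $\vartheta$ decreases by $r_j\pi$ as $\kappa$ passes $\kappa_j$ downward, so $\vartheta(\kappa)=-\pi\sum_{j:\,\kappa_j>\kappa}r_j$. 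With $(i\kappa)^{2s-1}\,i\,d\kappa=e^{i\pi s}\kappa^{2s-1}\,d\kappa$, the phase integral becomes
\[
\int_0^\infty \vartheta(\kappa)\,\kappa^{2s-1}\,d\kappa=-\pi\sum_j r_j\int_0^{\kappa_j}\kappa^{2s-1}\,d\kappa=-\frac{\pi}{2s}\sum_{j=1}^N r_j|\lambda_j|^s,
\]
and the contour identity reads
\[
F(s)+iG(s)=e^{i\pi s}\Big(A(s)-\frac{i\pi}{2s}\sum_{j=1}^N r_j|\lambda_j|^s\Big),\qquad A(s):=\int_0^\infty\log|D(i\kappa)|\,\kappa^{2s-1}\,d\kappa,
\]
where $A(s)$ is the one quantity I cannot evaluate explicitly.

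Finally I eliminate $A(s)$. For real $s\in(0,\tfrac12)$ all of $F,G,A$ and $S:=\sum_j r_j|\lambda_j|^s$ are real, so separating real and imaginary parts gives $F=A\cos\pi s+\tfrac{\pi}{2s}\sin(\pi s)\,S$ and $G=A\sin\pi s-\tfrac{\pi}{2s}\cos(\pi s)\,S$; forming $F\sin\pi s-G\cos\pi s$ cancels $A$ and leaves exactly $\tfrac{\pi}{2s}S$, which is \eqref{eqn:FG}. Both sides are analytic in the strip $0<\Re s<\tfrac12$ and agree on the real segment, so the identity extends to all complex $s$ in the strip by the identity theorem. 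I expect the main obstacle to be the rigorous justification of the contour rotation at the two endpoints — in particular the vanishing of the arc at the origin, which is what forces the reliance on the low-energy asymptotics of Proposition~\ref{prop:low-energy} and hence the extra hypothesis \eqref{eqn:potential2} when $m=1$ — together with the careful bookkeeping of the branch of $\log D$ and the phase jumps $-r_j\pi$ at the eigenvalues, on which the sign and the factor $\pi/(2s)$ in the final formula depend.
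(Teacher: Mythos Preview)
Your proof is correct, but it takes a genuinely different route from the paper. The paper applies the residue theorem to the logarithmic derivative $\frac{D'(\zeta)}{D(\zeta)}\zeta^{2s}$ over the full upper half-plane semicircular contour $\Gamma_{R,\varepsilon}$: the eigenvalue sum $\sum_j r_j e^{i\pi s}\kappa_j^{2s}$ drops out directly as the sum of residues at the zeros $i\kappa_j$, the arc contributions are killed via an integration by parts, and the final step combines the two half-line integrals using the symmetry $D(-k)=\overline{D(k)}$ from \eqref{eqn:even-odd}. You instead integrate $\log D(\zeta)\,\zeta^{2s-1}$ over only the first quadrant and use Cauchy's theorem (no residues at all); the eigenvalue sum enters through the phase jumps of $\log D$ along the positive imaginary axis, exploiting the fact that $D(i\kappa)$ is real, and you then eliminate the unknown quantity $A(s)=\int_0^\infty\log|D(i\kappa)|\,\kappa^{2s-1}\,d\kappa$ by taking the specific linear combination $F\sin\pi s-G\cos\pi s$. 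The paper's approach is the classical Buslaev--Faddeev argument and makes the origin of the $r_j$ factors transparent as orders of poles of $D'/D$; your approach avoids the integration by parts and the reflection symmetry altogether, at the cost of having to track the branch of $\log D$ and argue away the auxiliary $A(s)$. Both arguments rely on the same endpoint controls (Proposition~\ref{prop:low-energy} at $\zeta=0$, \eqref{eqn:D(z)-asym} at infinity), which is why the hypotheses coincide.
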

\begin{proof}
The proof is essentially the same as in the half-line case, by applying the residue theorem to $\int_{\Gamma_{R,\varepsilon}}\frac{\frac{d}{d\zeta}D(\zeta)}{D(\zeta)}\zeta^{2s}\,dq$. As in the half-line case, the contour $\Gamma_{R,\varepsilon}$ consists of the semi-circles of radius $R$ and $\varepsilon<R$ in the upper half-plane, connected by $[-R,-\varepsilon]\cup[\varepsilon,R]$. (Figure~\ref{fig:contour}.) There are two main differences for a star graph: $D$ may have zeros that are not simple, and integration over the contour $C_\varepsilon$ must use different low energy asymptotics for $D(\zeta)$. 

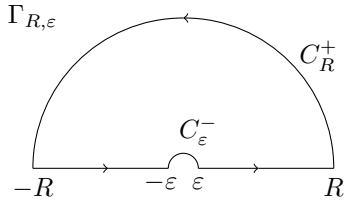
\begin{figure}[!ht]
\begin{center}
\begin{tikzpicture}
\draw [->] (2,0) arc (0:90:2);
\draw (0,2) arc (90:180:2);
\draw [->] (-2,0)--(-1,0);
\draw (-1,0)--(-.2,0);
\draw (-.2,0) arc (180:0:.2); 
\node [below] at (.2,0) {$\varepsilon$};
\node [below] at (-.3,.05) {$-\varepsilon$};
\draw [->] (.2,0)--(1,0);
\draw (1,0)--(2,0) node [below] {$R$};
\node [below] at (-2,0) {$-R$};
\node at (1.8,1.5) {$C_R^+$};
\node at (.2,.5) {$C_\varepsilon^-$};
\node at (-2,2) {$\Gamma_{R,\varepsilon}$};
%
\end{tikzpicture}
\end{center}
\caption{The contour $\Gamma_{R,\varepsilon}$.}\label{fig:contour}
\end{figure}

For $R,\varepsilon$ chosen so the contour contains all the zeros of $D(\zeta)$, the residue theorem yields,
\[
\int_{\Gamma_{R,\varepsilon}}\frac{\frac{d}{d\zeta}D(\zeta)}{D(\zeta)}\zeta^{2s}d\zeta=2\pi i\sum_{j=1}^N\operatorname{Res}_{\zeta=i\kappa_j}\frac{\frac{d}{d\zeta}D(\zeta)}{D(\zeta)}\zeta^{2s}=2\pi i\sum_{j=1}^Nr_je^{\pi is}\kappa_j^{2s},
\]
where $i\kappa_j=i|\lambda_j|^{1/2}$ is a zero of $D(z)$ of order $r_j$.
Integration by parts along a semicircle $C_{r}^+$ of radius $r$ yields, 
\[
\int_{C_r^+}\frac{\frac{d}{d\zeta}D(\zeta)}{D(\zeta)}\zeta^{2s}\,d\zeta=-2s\int_{C_r^+}\zeta^{2s-1}\log D(\zeta)\,d\zeta+(-r)^{2s}\log D(-r)-r^{2s}\log D(r).
\]
If $r=R$, this integral goes to zero for $\Re s<\frac{1}{2}$ since $D(\zeta)=1+O(|\zeta|^{-1})$; this is the same as in the half-line case. For $r=\varepsilon$, 
recall from Proposition~\ref{prop:low-energy} that $D(\zeta)=c\zeta^{m-1}(1+o(1))$ as $\zeta\to0$, and $c\ne0$, so
\[
|\varepsilon^{2s}\log D(\varepsilon)|=|\log(c\varepsilon^{m-1}(1+o(1)))|\varepsilon^{2\Re s}\le(|\log\varepsilon^{m-1}|+|\log|c||+o(1)+\pi)\varepsilon^{2\Re s}\to0
\]
as $\varepsilon\to0$, $m=0,1,2,\ldots$, since $\Re s>0$. Similarly,
\[
\left|\int_{C_\varepsilon^+}\zeta^{2s-1}\log D(\zeta)\,d\zeta\right|\lesssim|\log D(\varepsilon)|\varepsilon^{2\Re s-1}\cdot\varepsilon\to0.
\]
%
%
Now by taking $R\to\infty$ and $\varepsilon\to0$, we obtain
\[
\int_{-\infty}^\infty\frac{\frac{d}{d\zeta}D(\zeta)}{D(\zeta)}\zeta^{2s}\,d\zeta=2\pi i\sum_{j=1}^Nr_je^{\pi is}\kappa_j^{2s}.
\]
Integrating by parts and using $\log D(\zeta)=\log a(k)+i\eta(k)$ along with \eqref{eqn:even-odd} yields the desired formula.

\end{proof}

We can analytically continue $F$ and $G$ to $\Re s>0$ just as in the half-line case. Suppose we have the asymptotic expansion
\begin{equation*}
\log D(\zeta)=\sum_{m=1}^\infty L_m(2i\zeta)^{-m},\quad|\zeta|\to\infty,\;\Im\zeta\ge0.
\end{equation*}
Then
\begin{equation*}
\log a(k)=\sum_{j=1}^\infty(-1)^jL_{2m}(2k)^{-2j},\quad \eta(k)=\sum_{j=0}^\infty(-1)^{j+1}L_{2m+1}(2k)^{-2j-1},
\end{equation*}
and we can get the same result for graphs as in \cite[\S4-Lemma 6.4]{yafaev}:
\begin{lemma}[analytic continuation]\label{lem:analytic-continuation}
Assume the potential satisfies \eqref{eqn:potential} (integrability) and \eqref{eqn:potential-smooth}. If $m=1$, also assume it satisfies \eqref{eqn:potential2} (second moment). Then the functions $F(s)$ and $G(s)$ are meromorphic in the half-plane $\Re s>0$. The function $F(s)$ is analytic everywhere except integers $m=1,2,\ldots$, where it has simple poles with residues $(-1)^{m+1}2^{-2m-1}L_{2m}$. If $\Re s<1$, then the original definition is true. Otherwise, if $m<\Re s<m+1$, then
\begin{equation}
F(s)=\int_0^\infty(\log a(k)-\sum_{j=1}^m(-1)^jL_{2j}(2k)^{-2j})k^{2s-1}\,dk.
\end{equation}
The function $G(s)$ is analytic everywhere except half-integer points $m+\frac{1}{2}$, $m=0,1,2,\ldots$, where it has simple poles with residues $(-1)^m2^{-2m-2}L_{2m+1}$. If $m\ge1$ and $m-\frac{1}{2}<\Re s<m+\frac{1}{2}$, then
\begin{equation}
G(s)=\int_0^\infty(\eta(k)-\sum_{j=0}^{m-1}(-1)^{j+1}L_{2j+1}(2k)^{-2j-1})k^{2s-1}\,dk.
\end{equation}
\end{lemma}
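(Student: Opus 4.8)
The plan is to regard $F$ and $G$ as Mellin transforms and to continue them by the classical device of subtracting finitely many terms of the high-energy tail, following \cite[\S4.6]{yafaev}; the star-graph nature of the problem enters only through the two inputs that I treat as given, namely the low-energy behavior from Proposition~\ref{prop:low-energy} and the high-energy expansion stated above. Throughout write $P_m(k):=\sum_{j=1}^m(-1)^jL_{2j}(2k)^{-2j}$ and $Q_m(k):=\sum_{j=0}^{m-1}(-1)^{j+1}L_{2j+1}(2k)^{-2j-1}$ for the partial sums appearing in the statement.

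First I would record the behavior of the integrands at the two endpoints. For the high-energy end I use the expansion with a quantitative remainder, $\log D(k)=\sum_{m=1}^{M}L_m(2ik)^{-m}+O(k^{-M-1})$, valid uniformly down to the real axis. Since $\log D(-k)=\overline{\log D(k)}$ by \eqref{eqn:even-odd}, comparing the expansion at $k$ and $-k$ forces every $L_m$ to be real; then, because $(2ik)^{-2j}=(-1)^j(2k)^{-2j}$ is real while $(2ik)^{-(2j+1)}=(-1)^{j+1}i\,(2k)^{-(2j+1)}$ is imaginary, taking real and imaginary parts (with $M$ large enough) gives $\log a(k)=P_m(k)+O(k^{-2m-2})$ and $\eta(k)=Q_m(k)+O(k^{-2m-1})$. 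For the low-energy end, Proposition~\ref{prop:low-energy} gives $\log a(k)=(m_0-1)\log k+O(1)$ and $\eta(k)=O(1)$ as $k\to0^+$, where $m_0$ is the resonance multiplicity; in every resonance case these are integrable against $k^{2\Re s-1}$ near $0$ precisely for $\Re s>0$, so the low-energy endpoint never obstructs continuation into $\Re s>0$.

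Next I would carry out the continuation for $F$; the argument for $G$ is identical with odd powers replacing even. Splitting at $k=1$, the head $\int_0^1\log a(k)\,k^{2s-1}\,dk$ is analytic on all of $\Re s>0$ by the low-energy bound, while in the tail I subtract $P_m$: the remainder $\int_1^\infty(\log a(k)-P_m(k))k^{2s-1}\,dk$ converges and is analytic for $\Re s<m+1$ by the $O(k^{-2m-2})$ bound, and the explicit piece is the rational function $\int_1^\infty P_m(k)k^{2s-1}\,dk=\sum_{j=1}^m(-1)^jL_{2j}2^{-2j}(2j-2s)^{-1}$, which continues meromorphically with simple poles at $s=1,\dots,m$. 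This exhibits $F$ as meromorphic on $\Re s>0$ with simple poles at the positive integers, and reading off the $j=m$ term gives residue $(-1)^{m+1}2^{-2m-1}L_{2m}$ at $s=m$. To recover the clean formula on the strip $m<\Re s<m+1$, I note that there the near-zero subtraction $-\int_0^1P_m(k)k^{2s-1}\,dk$ converges and equals precisely that same rational piece, so that $\int_0^\infty(\log a(k)-P_m(k))k^{2s-1}\,dk$ coincides with the meromorphic continuation $F(s)$. For $G$ the same computation yields simple poles at the half-integers $m+\tfrac12$ with residue $(-1)^m2^{-2m-2}L_{2m+1}$, the pole at $m+\tfrac12$ being exposed by the $j=m$ term once $Q_{m+1}$ has been subtracted.

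The bookkeeping just described is routine; the real content sits in the two inputs. The low-energy input is immediate from Proposition~\ref{prop:low-energy}. The genuine obstacle is the high-energy expansion: the whole scheme depends not on a merely formal series but on an expansion with quantitative remainders $O(k^{-M-1})$ that are \emph{uniform up to and including the real axis} $\zeta=k$, since it is exactly the decay $O(k^{-2m-2})$ (resp.\ $O(k^{-2m-1})$) of the subtracted remainder that makes the tail integral analytic on the correct strip and pins down the poles. Establishing this expansion, together with the identification of the coefficients $L_m$, is deferred to Section~\ref{sec:high-energy}; granting it, the continuation above is a direct adaptation of the half-line argument.
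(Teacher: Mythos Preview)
Your argument is correct and follows precisely the standard Mellin-transform subtraction scheme of \cite[\S4, Lemma 6.4]{yafaev} that the paper invokes; indeed the paper gives no independent proof here, simply asserting that the half-line argument carries over once the low-energy asymptotics (Proposition~\ref{prop:low-energy}) and the high-energy expansion of Section~\ref{sec:high-energy} are in hand. Your residue computations and the identification of the strips are accurate, and your emphasis on the need for remainder bounds uniform down to $\Im\zeta=0$ is exactly the point the paper defers to Section~\ref{sec:high-energy} and Appendix~\ref{appendix:remainders}.
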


\begin{proof}[Proof (of most of Theorem~\ref{thm:trace})]
By analytic continuation, \eqref{eqn:FG} holds for all $\Re s>0$. Evaluating $s=m=1,2,\ldots$ and $s=m+1/2$ for $m=0,1,2,\ldots$, yields
\begin{align*}
\pi\operatorname{Res}_{s=m}F(s)-G(m)&=(-1)^m\frac{\pi}{2m}\sum_{j=1}^Nr_j|\lambda_j|^m\\
F(m+1/2)+\pi\operatorname{Res}_{s=m+1/2}G(s)&=(-1)^m\frac{\pi}{2m+1}\sum_{j=1}^N|\lambda_j|^{m+1/2}.
\end{align*}
Using Lemma~\ref{lem:analytic-continuation} yields the trace formulas in Theorem~\ref{thm:trace}.
\end{proof}

\section{High energy asymptotic expansions}\label{sec:high-energy}
In this section, we will compute the coefficients in the asymptotic series expansion,
\begin{equation}\label{eqn:log-asym}
\log D(\zeta)=\sum_{m=1}^\infty L_m(2i\zeta)^{-m},\quad|\zeta|\to\infty,\;\Im\zeta\ge0.
\end{equation}
To this end, assume \eqref{eqn:potential-smooth} holds for each $j=1,\ldots,n$ and $m\in\N_0$.
Using the formula for the PD \eqref{eqn:D(z)}, we have
\begin{equation}\label{eqn:log-sum}
\log D(\zeta)=\log\left(\frac{K(\zeta)}{in\zeta}\right)+\sum_{j=1}^n\log(\omega_j(\zeta)),
\end{equation}
which is permissible since each separated term is $1+O(|\zeta|^{-1})$ and hence has argument close to zero as $|\zeta|\to\infty$.
From \cite[\S4]{yafaev}, there is the following asymptotic series for the logarithm of each Jost function $\omega_j(\zeta)$,
\[
\log(\omega_j(\zeta))=\sum_{m=1}^\infty\ell_m^{[j]}(2i\zeta)^{-m},\quad|\zeta|\to\infty,\;\Im\zeta\ge0,
\]
with real coefficients
$\ell_m^{[j]}=-\int_0^\infty g_m^{[j]}(x)\,dx$,
where
\[
g_1^{[j]}(x)=v_j(x),\quad g_2^{[j]}(x)=-v_j'(x),\quad g_m^{[j]}(x)=-g_{m-1}^{[j]}{}'(x)- \sum_{p=1}^{m-2}g^{[j]}_p (x)g^{[j]}_{m-p-1}(x).
\]
Also, these $g_m(x)$ occur in the asymptotic expansion
\begin{equation}\label{eqn:g}
\frac{\theta'(x,\zeta)}{\theta(x,\zeta)}=i\zeta+\sum_{m=1}^\infty g_m(x)(2i\zeta)^{-m}.
\end{equation}
Now it remains to find an asymptotic expansion for $\log\left(\frac{K(\zeta)}{in\zeta}\right)$.
By \eqref{eqn:g},
\begin{align*}
\frac{K(x,\zeta)}{in\zeta}\equiv\frac{1}{in\zeta}\sum_{j=1}^n\frac{\theta_j'(x,\zeta)}{\theta_j(x,\zeta)} 
&=1+\sum_{m=2}^\infty\frac{2}{n}\left(g_{m-1}^{[1]}(x)+\cdots+g_{m-1}^{[n]}(x)\right)(2i\zeta)^{-m}. 
\end{align*}

To take the logarithm, we use the same method as in the half-line case, 
which is to find the asymptotic expansion for the logarithmic derivative, and then integrate. First we do long division: 
\begin{lemma}[logarithmic derivative]
If $A(x,\zeta)=1+\sum_{m=1}^M a_m(x)(2i\zeta)^{-m}+A_M(x,\zeta)$ and $A'(x,\zeta)=\sum_{m=2}^M a_m'(x)(2i\zeta)^{-m}+\tilde{A}_M(x,\zeta)$ are asymptotic expansions, then we have the asymptotic expansion, 
\begin{equation}
B(x,\zeta):=\frac{A'(x,\zeta)}{A(x,\zeta)}=\sum_{m=1}^M b_m(x)(2i\zeta)^{-m}+B_M(x,\zeta),
\end{equation}
where
\begin{equation}
b_1(x)=a_1'(x),\quad b_m(x)=a_m'(x)-\sum_{p=1}^{m-1}b_p(x) a_{m-p}(x).
\end{equation}
\end{lemma}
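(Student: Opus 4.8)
The plan is to obtain the asymptotic expansion for $B=A'/A$ by the standard formal power series manipulation: since $A = 1 + (\text{lower order terms})$, I would write the defining relation $A' = B\cdot A$ and match coefficients of like powers of $(2i\zeta)^{-m}$. Concretely, setting $B(x,\zeta) = \sum_{m=1}^M b_m(x)(2i\zeta)^{-m} + B_M(x,\zeta)$ as an ansatz and multiplying out the product $B\cdot A$ using the known expansions $A = 1 + \sum_{m=1}^M a_m(x)(2i\zeta)^{-m} + A_M$ and $A' = \sum_{m=1}^M a_m'(x)(2i\zeta)^{-m} + \tilde{A}_M$, I would collect the coefficient of each power $(2i\zeta)^{-m}$ for $1 \le m \le M$. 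Because $A$ has constant term $1$, the coefficient of $(2i\zeta)^{-m}$ in the product $B \cdot A$ is exactly $b_m(x) + \sum_{p=1}^{m-1} b_p(x)\, a_{m-p}(x)$ (the $p=m$ contribution pairs $b_m$ with the constant term $1$ of $A$). Equating this to the coefficient $a_m'(x)$ from $A'$ yields the recursion $b_m(x) = a_m'(x) - \sum_{p=1}^{m-1} b_p(x)\, a_{m-p}(x)$, and for $m=1$ the sum is empty so $b_1(x) = a_1'(x)$, as claimed.

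The key steps in order are: first, set up the multiplication $B\cdot A$ formally and isolate the contributions to each fixed order; second, read off the recursion for the $b_m$ from coefficient matching; third — and this is the genuinely substantive part — verify that the remainder term $B_M(x,\zeta)$ really is a valid remainder of order $o(|\zeta|^{-M})$ (or $O(|\zeta|^{-M-1})$, whatever precise sense of ``asymptotic expansion'' is in force here). Once the $b_m$ are defined by the recursion, I would define $B_M := A'/A - \sum_{m=1}^M b_m (2i\zeta)^{-m}$ and bound it by writing $A'/A = A'\cdot A^{-1}$, expanding $A^{-1}$ via its own asymptotic series (legitimate since $A \to 1$, so $1/A$ is bounded for $|\zeta|$ large), and tracking how the remainders $A_M$ and $\tilde{A}_M$ propagate through the product and through the inversion.

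The main obstacle is precisely this remainder control rather than the algebra, which is routine. The difficulty is that $A^{-1}$ involves an infinite geometric-type series in the ``tail'' $A - 1$, and one must ensure that the product of the two finite expansions plus their remainders, after subtracting the first $M$ terms, leaves something genuinely of the next order. This requires knowing that $A(x,\zeta)$ stays bounded away from zero uniformly for large $|\zeta|$ (guaranteed by $A = 1 + O(|\zeta|^{-1})$) and that the remainders $A_M, \tilde{A}_M$ satisfy estimates uniform in $x$; the decay hypotheses \eqref{eqn:potential-smooth} on the potential are what ultimately feed these uniform bounds, since the relevant $A$ here is $K(x,\zeta)/(in\zeta)$ whose coefficients are built from the $g_m^{[j]}(x)$. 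I expect the paper to either defer these remainder estimates to Appendix~\ref{appendix:remainders} or to invoke them from the corresponding half-line treatment in \cite[\S4]{yafaev}; in my own write-up I would state the uniform-in-$x$ remainder bound as the crux and carry it through the product and inversion, leaving the coefficient recursion itself as the easy formal consequence.
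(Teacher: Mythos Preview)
Your proposal is correct and matches the paper's approach essentially verbatim. The paper introduces the lemma as ``long division,'' states the recursion without further proof (your coefficient-matching via $A' = B\cdot A$ is the same computation), and then in Appendix~\ref{appendix:remainders} does exactly what you outline for the remainder: it first proves an asymptotic expansion with uniform-in-$x$ remainder bounds for $A(x,\zeta)^{-1}$ via the geometric series (Lemma~\ref{lem:A}), and then multiplies $A'\cdot A^{-1}$ and tracks the error terms to bound $B_M$ (Lemma~\ref{lem:B}).
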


We fix a branch of the logarithm using $\frac{K(\zeta)}{in\zeta}=1+O(|\zeta|^{-1})$ and requiring $\log\left(\frac{K(\zeta)}{in\zeta}\right)\to0$ as $|\zeta|\to\infty$. Integrating, we get the following:
\begin{cor}[logarithm expansion]\label{cor:log-expansion}
Let $A(x,\zeta),B(x,\zeta)$ be as above, and suppose we have $\int_0^\infty b_m(x)\,dx<\infty$ for $m=1,\ldots,M$ and $\int_0^\infty|B_M(x,\zeta)|\,dx\lesssim_{M}|\zeta|^{-M-1}$ for each $\zeta$, $\Im\zeta\ge0$, $|\zeta|\ge c>0$. Then we can integrate term by term to get the asymptotic expansion,
\begin{equation}
\log A(x,\zeta)=\sum_{m=1}^M C_m(x)(2i\zeta)^{-m}+C_M(x,\zeta),\qquad C_m(x)=-\int_x^\infty b_m(y)\,dy.
\end{equation}
\end{cor}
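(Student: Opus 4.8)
The plan is to exploit the identity $B = A'/A = \partial_x \log A$, which says precisely that $\log A$ is an $x$-antiderivative of the already-known expansion of $B$, and then to fix the integration constant by the behavior at $x=\infty$. First I would record that $A(x,\zeta)=1+O(|\zeta|^{-1})$ uniformly in $x\ge 0$ (in the application this is even $O(|\zeta|^{-2})$, since the leading correction is $\tfrac{2}{n}\sum_j g_1^{[j]}(x)(2i\zeta)^{-2}$ with $g_1^{[j]}=v_j$ bounded and decaying). Hence there is $c>0$ so that for $|\zeta|\ge c$, $\Im\zeta\ge0$ the value $A(x,\zeta)$ lies in a fixed neighborhood of $1$ for all $x\ge 0$; in particular $A\neq 0$ there, the prescribed branch of $\log A$ (normalized by $\log A\to 0$ as $|\zeta|\to\infty$) is well-defined and single-valued, and $\partial_x\log A(x,\zeta)=A'(x,\zeta)/A(x,\zeta)=B(x,\zeta)$.

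Next I would integrate this relation in $x$ from $x$ to $T$, obtaining $\log A(T,\zeta)-\log A(x,\zeta)=\int_x^T B(y,\zeta)\,dy$, and let $T\to\infty$. The crucial point is the boundary term: since the coefficients $a_m$ and the remainder $A_M$ are assembled from the potential and its derivatives, all of which vanish as $x\to\infty$, we have $A(T,\zeta)\to 1$, and continuity of the chosen branch at $1$ forces $\log A(T,\zeta)\to 0$ (no stray multiple of $2\pi i$ appears precisely because $A$ never winds around $0$ for $|\zeta|\ge c$). Meanwhile the hypotheses $\int_0^\infty |b_m(y)|\,dy<\infty$ and $\int_0^\infty |B_M(y,\zeta)|\,dy\lesssim_M |\zeta|^{-M-1}$ make $B(\cdot,\zeta)$ absolutely integrable on $[x,\infty)$, so $\int_x^T B\to \int_x^\infty B$. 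Letting $T\to\infty$ then gives $\log A(x,\zeta)=-\int_x^\infty B(y,\zeta)\,dy$.

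Finally I would substitute $B(y,\zeta)=\sum_{m=1}^M b_m(y)(2i\zeta)^{-m}+B_M(y,\zeta)$ and split the integral, which is legitimate since every piece is integrable. This yields exactly $\log A(x,\zeta)=\sum_{m=1}^M C_m(x)(2i\zeta)^{-m}+C_M(x,\zeta)$ with $C_m(x)=-\int_x^\infty b_m(y)\,dy$ and $C_M(x,\zeta)=-\int_x^\infty B_M(y,\zeta)\,dy$. The claimed remainder bound is then immediate, $|C_M(x,\zeta)|\le \int_x^\infty |B_M(y,\zeta)|\,dy\le \int_0^\infty |B_M(y,\zeta)|\,dy\lesssim_M |\zeta|^{-M-1}$, uniformly in $x\ge 0$.

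I expect the only genuine subtlety to be the vanishing of the boundary term $\log A(\infty,\zeta)$: one must be sure that the branch normalized by $\log\bigl(K/(in\zeta)\bigr)\to 0$ really is the branch with $\log A(T,\zeta)\to 0$ as $T\to\infty$, rather than one differing by an integer multiple of $2\pi i$. This is guaranteed by the uniform estimate $A=1+O(|\zeta|^{-1})$, which keeps $A$ inside a small disk about $1$ over the whole range $x\ge0$ once $|\zeta|$ is large, so the logarithm is unambiguous; this is why the uniformity in $x$ of that estimate is the load-bearing hypothesis. The remaining steps are routine applications of the fundamental theorem of calculus together with the integrability supplied directly by the hypotheses.
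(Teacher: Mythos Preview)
Your proposal is correct and follows essentially the same route as the paper: the paper simply says ``Integrating, we get the following'' before stating the corollary, and then in Appendix~\ref{appendix:remainders} verifies the remainder bound via $|C_M(x,\zeta)|\le\int_x^\infty|B_M(y,\zeta)|\,dy\lesssim_M|\zeta|^{-M-1}$, exactly as you do. You are more explicit than the paper about the boundary term $\log A(\infty,\zeta)=0$ and the branch choice (which the paper handles only by the one-line normalization $\log(K/(in\zeta))\to 0$ as $|\zeta|\to\infty$), and you are right that this is the only point requiring care.
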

We also define $C_1:=C_1(0)=a_1(0)$, and for $m\ge2$, if $a_m(x)\to0$ as $x\to\infty$,
\begin{align*}
C_m:=C_m(0)&=-\int_0^\infty(a_m'(y)-\sum_{p=1}^{m-1}b_p(y)a_{m-p}(y))\,dy\\
&=a_m(0)+\sum_{p=1}^{m-1}\int_0^\infty b_p(y)a_{m-p}(y)\,dy.
\end{align*}

In our case, $A(x,\zeta)=\frac{K(x,\zeta)}{in\zeta}$, so $a_1(x)\equiv0$, $a_m(x):=\frac{2}{n}\left(g_{m-1}^{[1]}(x)+\cdots+g_{m-1}^{[n]}(x)\right)$, and we get the asymptotic series
\begin{equation}\label{eqn:logA}
\log\left(\frac{K(\zeta)}{in\zeta}\right)=\sum_{m=2}^\infty C_m(2i\zeta)^{-m},\quad|\zeta|\to\infty,\;\Im\zeta\ge0.
\end{equation}
(We show in Appendix~\ref{appendix:remainders} that the functions $A(x,\zeta)$ and $B(x,\zeta)$ in our case do indeed satisfy the necessary hypotheses to apply the corollary.)
The first few $C_m$'s (other than $C_1=0$) are
\begin{multline}\label{eqn:C_m}
C_2=\frac{2}{n}\sum_{j=1}^nv_j(0),\quad C_3=-\frac{2}{n}\sum_{j=1}^nv_j'(0),\quad C_4=\frac{2}{n}\sum_{j=1}^n(v_j''(0)-v_j(0)^2)-\frac{2}{n^2}\left(\sum_{j=1}^n v_j(0)\right)^2,\\
C_5=\frac{2}{n}\sum_{j=1}^n\left(-v_j'''(0)+4v_j'(0)v_j(0)\right)+\frac{4}{n^2}\left(\sum_{j=1}^nv_j(0)\sum_{j=1}^nv_j'(0)\right).\hfill{}
\end{multline}

Equation \eqref{eqn:log-sum} becomes,
\begin{equation}\label{eqn:log-sum-total}
\log D(\zeta)=\sum_{m=1}^\infty \left(C_m+\left(\ell_m^{[1]}+\cdots+\ell_m^{[n]}\right)\right)(2i\zeta)^{-m},
\end{equation}
so $L_m=C_m+\sum_{j=1}^n\ell_{m}^{[j]}$. Now using \eqref{eqn:C_m} and the definition of the $\ell_m$'s, we  get \eqref{eqn:Lm}. 

\section{Reduction to the real line}\label{sec:real}

A star graph with only $n=2$ edges can be identified with the whole real line. So using results for star graphs, we can prove things about scattering on the real line. The real line case with $v$ smooth is handled directly in \cite[\S5]{yafaev}, but we will recover the results for the real line with $v$ smooth away from a point by setting $n=2$ in the star graph case. 
A potential ${v}$ on $\R$ is viewed as a pair of potentials $v_1(x):={v}(x)$ and $v_2(x):={v}(-x)$, $x\ge0$ on the $n=2$ star graph. Using results for star graphs from \cite{demirel-ssf} along with Theorem~\ref{thm:trace}, we can show:
\begin{cor}[real line]
Consider the Schr\"{o}dinger operator $H=-\frac{d^2}{dx^2}+v$ on the real line and suppose $\int_\R |v(x)|\,dx<\infty$.
Then the following hold.
\begin{enumerate}[(i)]
\item The perturbation determinant is
\begin{align}
D(\zeta)&=-\frac{1}{2i\zeta}\left[\theta_1'(0,\zeta)\theta_2(0,\zeta)+\theta_2'(0,\zeta)\theta_1(0,\zeta)\right]\\
&=-\frac{1}{2i\zeta}W(\theta_2(-\cdot,\zeta),\theta_1(\cdot,\zeta))=:m(\zeta),\label{eqn:real-wronskian}
\end{align}
where $W$ is the Wronskian and $\theta_1,\theta_2$ are the Jost solutions on the two half-line edges.
\item We have the trace formula
\begin{equation}
\operatorname{tr}(R_0(z)-R(z))=\frac{\frac{d}{d\zeta^2}D(\zeta)}{D(\zeta)}=\frac{\dot{m}(\zeta)}{2\zeta m(\zeta)},\quad z=\zeta^2.
\end{equation}
\item (low energy asymptotics) Assume $\int_\R(1+x^2)|v(x)|\,dx<\infty$, and let $W(\zeta):=W(\theta_2(-\cdot,\zeta),\theta_1(\cdot,\zeta))$. If $W(0)=0$, then $\theta_1(x,0)=\alpha\theta_2(-x,0)$ for some $\alpha\in\R\setminus\{0\}$, and in this case,
\begin{equation}
W(\zeta)=-i(\alpha+\alpha^{-1})\zeta+O(|\zeta|^2),\quad|\zeta|\to0.
\end{equation}
\item Assume again $\int_\R(1+x^2)|v(x)|\,dx<\infty$. The Schr\"{o}dinger operator $H$ has a zero energy resonance of order one if $W(0)=0$, and no zero energy resonance if $W(0)\ne0$.
\item (Levinson's formula) Suppose $\int_\R(1+|x|)|v(x)|\,dx<\infty$, and let  $N$ be the number of negative eigenvalues of $H$ and $m\in\{0,1\}$ be the multiplicity of the resonance at $\zeta=0$. If $m=1$, also suppose that $\int_{\R}(1+x^2)|v(x)|\,dx<\infty$. Then
\begin{equation} 
\eta(\infty)-\eta(0)=\pi\left(N+\frac{m-1}{2}\right).
\end{equation}
\item (trace formulas) Suppose that $\int_\R(1+|x|)|v(x)|\,dx<\infty$ and \eqref{eqn:potential-smooth} for $j=1,2$ hold. Then we get the trace formulas in Theorem~\ref{thm:trace}. In particular, if $v\in C^\infty(\R)$, the first few $L_m$'s are
\begin{multline}
L_1=-\int_{-\infty}^\infty v(x)\,dx,\quad L_2=0,\quad L_3=\int_{-\infty}^{\infty}v^2(x)\,dx,\quad L_4=0,\quad L_5=-\int_{-\infty}^\infty(v'(x)^2+2v^3(x)).\hfill{}
\end{multline}
\end{enumerate}
\end{cor}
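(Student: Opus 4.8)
The plan is to set up a precise dictionary between the $n=2$ star graph and the line $\R$, and then to read off each of (i)--(vi) either by specializing the general star-graph results of Section~\ref{sec:known} and Theorem~\ref{thm:trace}, or by a short low-energy computation. Writing $v_1(x)=v(x)$ and $v_2(x)=v(-x)$ for $x\ge0$, the two half-line edges glue to $\R$ so that edge-$1$ coordinate $x$ is the point $x\in[0,\infty)$ and edge-$2$ coordinate $x$ is the point $-x\in(-\infty,0]$; continuity and the Kirchhoff condition \eqref{eqn:kirchoff} at the single vertex are exactly the statement that the glued function is $C^1$ across $0$. Under this identification $\theta_1(\cdot,\zeta)$ is the Jost solution of the line problem at $+\infty$ and $\theta_2(-\cdot,\zeta)$ is the Jost solution at $-\infty$; from \eqref{eqn:jost} one has $\theta_1(x,0)\to1$ as $x\to+\infty$ and $\theta_2(-x,0)\to1$ as $x\to-\infty$. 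For (i) I would substitute $n=2$ into Proposition~\ref{prop:D(z)}: since $K(\zeta)=\theta_1'(0,\zeta)/\theta_1(0,\zeta)+\theta_2'(0,\zeta)/\theta_2(0,\zeta)$ and $\omega_j=\theta_j(0,\zeta)$, the product $K\,\omega_1\omega_2$ collapses to $\theta_1'(0,\zeta)\theta_2(0,\zeta)+\theta_2'(0,\zeta)\theta_1(0,\zeta)$, and the chain rule $\tfrac{d}{dx}\theta_2(-x,\zeta)=-\theta_2'(-x,\zeta)$ identifies this bracket with $W(\theta_2(-\cdot,\zeta),\theta_1(\cdot,\zeta))$ (with the convention $W(f,g)=fg'-f'g$). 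Part (ii) is immediate from the third listed property of the perturbation determinant together with $\tfrac{d}{d\zeta^2}=\tfrac{1}{2\zeta}\tfrac{d}{d\zeta}$, giving $\operatorname{tr}(R_0-R)=\dot m/(2\zeta m)$.

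Parts (iii) and (iv) are the genuine low-energy input. For (iv) I would argue through solution spaces: a zero-energy resonance is a nontrivial bounded $C^1$ solution of $-\psi''+v\psi=0$ on $\R$; the solutions bounded at $+\infty$ form the line $\C\,\theta_1(\cdot,0)$ and those bounded at $-\infty$ the line $\C\,\theta_2(-\cdot,0)$, so a global bounded solution exists precisely when these lines coincide, i.e.\ when $W(0)=W(\theta_2(-\cdot,0),\theta_1(\cdot,0))=0$, in which case the resonance space is one-dimensional, hence of order one; if $W(0)\ne0$ the two lines are independent and no resonance exists. This matches Proposition~\ref{prop:low-energy} via \eqref{eqn:real-wronskian} (order $\zeta^0$ when $W(0)=0$, giving resonance multiplicity $1$; order $\zeta^{-1}$ otherwise, multiplicity $0$). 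For (iii), when $W(0)=0$ the vanishing Wronskian forces $\theta_1(x,0)=\alpha\theta_2(-x,0)$, with $\alpha$ real and nonzero since $\theta_j(x,0)$ is real by \eqref{eqn:even-odd}; comparing limits at $\pm\infty$ gives $\theta_2(-x,0)\to\alpha^{-1}$ as $x\to+\infty$ and $\theta_1(x,0)\to\alpha$ as $x\to-\infty$. Differentiating $-\psi''+v\psi=\zeta^2\psi$ in $\zeta$ shows $\partial_\zeta\theta_1(\cdot,0)$ and $\partial_\zeta\theta_2(-\cdot,0)$ again solve the zero-energy equation, so I may split $\dot W(0)$ into the two $x$-independent Wronskians $W(\theta_2(-\cdot,0),\partial_\zeta\theta_1(\cdot,0))$ and $W(\partial_\zeta\theta_2(-\cdot,0),\theta_1(\cdot,0))$ and evaluate each at the convenient end ($+\infty$ and $-\infty$). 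Using \eqref{eqn:jost}--\eqref{eqn:jost-derivative}, for large $x$ one has $\partial_\zeta\theta_1(x,0)\sim ix$ and $\partial_\zeta\theta_1'(x,0)\to i$, and symmetrically at $-\infty$; the $O(x)$ terms cancel against $\theta_j'(x,0)$, which under the second-moment hypothesis \eqref{eqn:potential2} decay faster than $x^{-1}$, leaving contributions $\tfrac{i}{\alpha}$ and $i\alpha$, i.e.\ $\dot W(0)=\pm i(\alpha+\alpha^{-1})$ and the claimed $W(\zeta)=-i(\alpha+\alpha^{-1})\zeta+O(|\zeta|^2)$.

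For (v) I would invoke the Levinson formula for star graphs from \cite{demirel-ssf} specialized to $n=2$ (equivalently, apply the argument principle to $D(\zeta)$ on the contour of Figure~\ref{fig:contour}, whose zeros in the upper half-plane count the $N$ negative eigenvalues, the half-turn at the origin contributing the term $\tfrac{m-1}{2}$ from the order $\zeta^{m-1}$ in Proposition~\ref{prop:low-energy}). Part (vi) is an application of Theorem~\ref{thm:trace} with $n=2$: the hypotheses \eqref{eqn:potential}, \eqref{eqn:potential-smooth}, and (if $m=1$) \eqref{eqn:potential2} translate directly into the stated conditions on $v$, and the coefficients follow from \eqref{eqn:Lm} by setting $n=2$ and rewriting half-line vertex values and integrals of $v_1,v_2$ in terms of $v$ on $\R$. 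For $v\in C^\infty(\R)$ one has $v_1^{(\ell)}(0)=v^{(\ell)}(0)$ and $v_2^{(\ell)}(0)=(-1)^\ell v^{(\ell)}(0)$, so $\sum_j v_j^{(2k+1)}(0)=0$; combined with the vanishing factors $\tfrac{2}{n}-1=0$ and $1-\tfrac{2}{n}=0$ at $n=2$, every vertex term in $L_1,\dots,L_5$ cancels, while $\sum_j\int_0^\infty(\cdots)\,dx$ becomes $\int_\R(\cdots)\,dx$, yielding the listed values.

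The main obstacle is part (iii): unlike the algebraic specializations in (i), (ii), (vi), it requires genuine first-order low-energy asymptotics of the Jost solutions, and the $O(x)$ growth of $\partial_\zeta\theta_j(x,0)$ forces the Wronskian limits at $\pm\infty$ to be controlled with the decay estimates coming from \eqref{eqn:potential2}; tracking signs consistently through the Wronskian convention, so that (i), (iii), and Proposition~\ref{prop:low-energy} agree, is where the computation is most error-prone.
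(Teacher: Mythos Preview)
Your approach matches the paper's: specialize the star-graph machinery to $n=2$ and read off each item. The paper is terser, simply invoking Proposition~\ref{prop:D(z)} for (i)--(ii), the proof of Proposition~\ref{prop:low-energy} in \cite{demirel-ssf} for (iii), Proposition~\ref{prop:low-energy} itself for (iv), \cite{demirel-ssf} for (v), and a direct substitution into \eqref{eqn:Lm} for (vi); you instead supply self-contained arguments for (iii) and (iv), and your Wronskian-differentiation computation for (iii) is the standard technique underlying the cited proof (so this is the same route with the black box opened, not a different one). One small point on (vi): your explanation that the vertex terms in $L_4$ and $L_5$ vanish because the factors $\tfrac{2}{n}-1$ and $1-\tfrac{2}{n}$ are zero at $n=2$ is incomplete---the coefficients $[\tfrac{2}{n}-2]$, $\tfrac{2}{n^2}$, $[\tfrac{8}{n}-6]$, and $\tfrac{4}{n^2}$ in \eqref{eqn:Lm} do \emph{not} vanish at $n=2$, and one must check that the surviving pieces cancel against each other (they do, using $v_1(0)=v_2(0)$ and $v_1'(0)+v_2'(0)=0$); the conclusion is correct but the justification needs this extra step.
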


\begin{rmks}\mbox{}
\begin{enumerate}[(i)]
\item In (vi), using Theorem~\ref{thm:trace} allows for a potential $v$ that is \emph{discontinuous} at $x=0$. We emphasize that in this case the trace formulas contain \emph{additional contributions} from the discontinuities of $v$ and its derivatives at $0$.
\item For $v\in C^\infty(\R)$, although we can compute the values $L_m$ using \eqref{eqn:Lm}, a much simpler formula is derived in \cite[\S5.2]{yafaev}. These are
\begin{align}
L_{2m+1}&=-\int_{-\infty}^\infty g_{2m+1}(x)\,dx,\quad L_{2m}=0.
\end{align}
\item The Jost functions $\theta_1,\theta_2$ extend easily from $[0,\infty)$ to $\R$. This follows from the existence proof of Jost solutions on the half-line given in e.g. \cite[\S4.1]{yafaev}, and ensures that \eqref{eqn:real-wronskian} makes sense. The $\theta_1$ here agrees with the Jost solution $\theta_1$ described in \cite[\S5.1]{yafaev}, but because we identify $e_2$ with $[0,\infty)$ rather than $(-\infty,0]$, the $x$ argument in $\theta_2$ must be negated to match the $\theta_2$ in \cite[\S5.1]{yafaev}. 
\end{enumerate}
\end{rmks}

\begin{proof}
(i) and (ii) follow immediately from Proposition~\ref{prop:D(z)}.
(iii) follows from the proof of Proposition~\ref{prop:low-energy} (low energy asymptotics) found in \cite{demirel-ssf}, though some of the steps are simplified in the case $n=2$. 
The fact $\alpha\in\R$ comes from $\theta_j(x,0)=\overline{\theta_j(x,0)}$.
(iv) follows from Proposition~\ref{prop:low-energy}. 
(v) follows from the result for star graphs proved in \cite{demirel-ssf}.
For (vi),  because $v\in C^\infty(\R)$, we require $v_1^{(2k)}(x)=v_2^{(2k)}(x)$ and $v_1^{(2k+1)}(x)=-v_2^{(2k+1)}(x)$ for all $k\in\N_0$. Then we just compute $L_m$, $1\le m\le 5$ via \eqref{eqn:Lm}.

\end{proof}

\appendix

\section{Remainder estimates for Section~\ref{sec:high-energy}}\label{appendix:remainders}

Here we show that the asymptotic expansion for $\log\left(\frac{K(\zeta)}{in\zeta}\right)$ derived in Section~\ref{sec:high-energy} is valid, in particular, that Corollary~\ref{cor:log-expansion} applies to our specific $A(x,\zeta)$ and $B(x,\zeta)$. Ignoring $x$, it is easy to verify that $B(x,\zeta)$ has a valid asymptotic series in terms of powers $(2i\zeta)^{-m}$, but to integrate, we need error bounds involving $x$ as well. Recall that
\[
A(x,\zeta)=\frac{K(x,\zeta)}{in\zeta}=1+\sum_{m=2}^M a_m(x)(2i\zeta)^{-m}+A_M(x,\zeta),
\]
where $a_m(x)=\frac{2}{n}\sum_{j=1}^n g_{m-1}^{[j]}(x)$. From properties of the $g_m$'s which can be found in e.g. \cite[\S4.4]{yafaev}, it follows that for all $x\ge0$ and $\Im\zeta\ge0$, $|\zeta|\ge c>0$,
\begin{align}
|A_M(x,\zeta)|&\le C_M|\zeta|^{-M-1}(1+|x|)^{-\rho-M+1}\\
\left|\frac{\partial A_M(x,\zeta)}{\partial x}\right|&\le C_M|\zeta|^{-M-1}(1+|x|)^{-\rho-M}\label{eqn:partial}\\
a_m^{(j)}(x)&=O(x^{-1-(m-1)(\rho-1)-j}),\quad x\to\infty.\label{eqn:a-m}
\end{align}

\begin{lemma}\label{lem:A} 
There is the asymptotic expansion $A(x,\zeta)^{-1}=1+\sum_{m=2}^M\alpha_m(x)(2i\zeta)^{-m}+\alpha_M(x,\zeta)$, with estimates for all $x\ge0$ and $\Im\zeta\ge0$, $|\zeta|\ge c>0$,
\begin{align}
|\alpha_M(x,\zeta)|&\le\tilde{C}_M|\zeta|^{-M-1}(1+x)^{-1-M(\rho-1)}\label{eqn:error}\\
|\alpha_m(x)|&\le\tilde{c}_M2^m(1+x)^{-1-(m-1)(\rho-1)},\quad 2\le m\le M.\label{eqn:alpha-m}
\end{align}
\end{lemma}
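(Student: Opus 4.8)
The plan is to define the coefficients $\alpha_m$ by formally inverting the series for $A$, and then to establish the two estimates separately: first the pointwise bound \eqref{eqn:alpha-m} on each $\alpha_m$ by induction, and afterward the remainder bound \eqref{eqn:error} by exploiting the algebraic identity $A\cdot A^{-1}=1$. Writing $a_0(x)\equiv1$, $a_1(x)\equiv0$, and $\alpha_0(x)\equiv1$, I would define $\alpha_m$ for $m\ge1$ through the convolution recursion obtained by matching coefficients of $(2i\zeta)^{-m}$ in $A\,A^{-1}=1$, namely
\begin{equation*}
\alpha_m(x)=-\sum_{p=2}^{m}a_p(x)\,\alpha_{m-p}(x),
\end{equation*}
so that $\alpha_1\equiv0$ and $\alpha_2=-a_2$. (Since $a_1\equiv0$, the nonzero terms are $p=m$, giving $a_m$, and $2\le p\le m-2$.)

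For \eqref{eqn:alpha-m}, I would first upgrade \eqref{eqn:a-m} (with $j=0$) together with the continuity of the $g_m^{[j]}$ near the vertex to the uniform bound $|a_m(x)|\le A_0(1+x)^{-1-(m-1)(\rho-1)}$ for all $x\ge0$ and $2\le m\le M$. The claim then follows by strong induction on $m$. The decisive arithmetic is that a product $a_p\alpha_{m-p}$ carries the decay exponent $-2-(m-2)(\rho-1)$, and since $\rho\le2$ gives $\rho-1\le1$ one has $-2-(m-2)(\rho-1)\le-1-(m-1)(\rho-1)$, so (using $1+x\ge1$) each such product decays \emph{at least} as fast as the target $(1+x)^{-1-(m-1)(\rho-1)}$. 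Summing the recursion and estimating the geometric growth $\sum_p 2^{m-p}\le 2^{m-1}$ produces the factor $2^m$, with $\tilde c_M$ absorbing $A_0$ and the base cases.

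The heart of the matter is the remainder estimate \eqref{eqn:error}. I would set $\mathcal{P}:=1+\sum_{m=2}^Ma_m(2i\zeta)^{-m}$ and $\mathcal{Q}:=1+\sum_{m=2}^M\alpha_m(2i\zeta)^{-m}$, so that $A=\mathcal{P}+A_M$ and $A^{-1}=\mathcal{Q}+\alpha_M$. By the defining recursion, all coefficients of $(2i\zeta)^{-m}$ with $1\le m\le M$ in $\mathcal{P}\mathcal{Q}$ vanish, leaving $\mathcal{P}\mathcal{Q}=1+\sum_{m=M+1}^{2M}r_m(x)(2i\zeta)^{-m}$ with $r_m=\sum_{\substack{p+q=m\\2\le p,q\le M}}a_p\alpha_q$. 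Solving $1=A\,A^{-1}=(\mathcal{P}+A_M)(\mathcal{Q}+\alpha_M)$ for $\alpha_M$ gives the clean identity
\begin{equation*}
\alpha_M=-A^{-1}\Bigl(\sum_{m=M+1}^{2M}r_m(2i\zeta)^{-m}+A_M\,\mathcal{Q}\Bigr).
\end{equation*}
Three ingredients then finish the estimate. Enlarging $c$ if necessary, $A=1+O(|\zeta|^{-1})$ uniformly in $x$ yields $|A(x,\zeta)|\ge\tfrac12$, hence $|A^{-1}|\le2$. For the tail, \eqref{eqn:alpha-m} and the decay of $a_p$ give $|r_m(x)|\lesssim(1+x)^{-2-(m-2)(\rho-1)}\le(1+x)^{-1-M(\rho-1)}$ for $m\ge M+1$ (again using $\rho\le2$), while $|(2i\zeta)^{-m}|\lesssim_M|\zeta|^{-M-1}$ for $M+1\le m\le 2M$ (since $|\zeta|\ge c$) collapses every tail power to order $|\zeta|^{-M-1}$. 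For the last term, the hypothesis $|A_M|\le C_M|\zeta|^{-M-1}(1+x)^{-\rho-M+1}$ and the uniform boundedness of $\mathcal{Q}$ give $|A_M\mathcal{Q}|\lesssim|\zeta|^{-M-1}(1+x)^{-\rho-M+1}$, and $(M-1)(\rho-2)\le0$ shows $(1+x)^{-\rho-M+1}\le(1+x)^{-1-M(\rho-1)}$. Combining the three bounds yields \eqref{eqn:error}.

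The main obstacle I anticipate is the bookkeeping of the $x$-decay exponents in the remainder: every step where two decaying factors are multiplied, or where the decay $(1+x)^{-\rho-M+1}$ of $A_M$ is compared against the target $(1+x)^{-1-M(\rho-1)}$, hinges on $\rho\le2$, so it is precisely here that the restriction $\rho\in(1,2]$ in \eqref{eqn:potential-smooth} is consumed. The second delicate point is securing the uniform-in-$x$ lower bound $|A(x,\zeta)|\ge\tfrac12$, which makes $A^{-1}$ a bounded multiplier in the identity above; this requires the $a_m$ and $A_M$ to be bounded \emph{uniformly} in $x$, not merely integrable.
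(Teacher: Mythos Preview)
Your proof is correct, and it takes a genuinely different route from the paper's. The paper expands $A^{-1}$ via the geometric series $\sum_k(-1)^k(A-1)^k$, identifies each $\alpha_m$ explicitly as a signed sum over compositions of $m$ into parts $\ge 2$, and gets the factor $2^m$ by the crude count that there are at most $2^{m-1}$ compositions of $m$. For the remainder \eqref{eqn:error} the paper then splits the error into (i) pure higher-order monomials $(2i\zeta)^{-M-j}$ arising from products of the $a_p$'s, and (ii) mixed terms containing at least one power of $A_M$, and estimates the latter with somewhat heavy binomial bookkeeping (bounding $\binom{\lfloor m/2\rfloor+j}{j}\le e^je^m$ and summing double series).

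Your approach replaces all of this by two clean devices: the convolution recursion for the $\alpha_m$'s (giving the coefficient bound by a short induction, with exactly the same $\rho\le 2$ arithmetic the paper uses), and the algebraic identity $\alpha_M=-A^{-1}\bigl((\mathcal P\mathcal Q-1)+A_M\mathcal Q\bigr)$ for the remainder. The latter is noticeably more economical than the paper's combinatorics: once $|A^{-1}|\le 2$ and $|\mathcal Q|\lesssim 1$ are secured, only finitely many tail coefficients $r_m$ and the single product $A_M\mathcal Q$ need to be estimated. The step ``enlarging $c$ if necessary'' to guarantee $|A|\ge\tfrac12$ is not a cheat---the paper's geometric series requires $|A-1|<1$, i.e.\ the same largeness of $|\zeta|$. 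What the paper's approach buys in exchange is an explicit closed-form description of each $\alpha_m$ via compositions, which your recursion gives only implicitly; for the purposes of this lemma, however, your argument is both shorter and structurally cleaner.
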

\begin{proof}
Using the power series expansion for $(1+(A(x,\zeta)-1))^{-1}$ (since $A(x,\zeta)-1=O(|\zeta|^{-2})$), we get
\begin{equation}\label{eqn:inverse-expansion}
A(x,\zeta)^{-1}=1+\sum_{k=1}^\infty(-1)^k\left(\sum_{m=2}^Ma_m(x)(2i\zeta)^{-m}+A_M(x,\zeta)\right)^k.
\end{equation}
By \eqref{eqn:a-m} with $j=0$, we get for $x\ge x_0>0$, 
\begin{equation}\label{eqn:a-m2}
|a_m(x)|\le c_1x^{-1-(m-1)(\rho-1)}\le c_2(1+x)^{-1-(m-1)(\rho-1)}, \quad 2\le m\le M.
\end{equation}
With a possibly larger value of $c_2$ the right-most bound for $|a_m(x)|$ holds for all $x\ge0$. 
From expanding \eqref{eqn:inverse-expansion}, the coefficient $\alpha_m(x)$ on $(2i\zeta)^{-m}$, $2\le m\le M$, is 
\[
\alpha_m(x)=\sum_{k=1}^{\lfloor m/2\rfloor}\sum_{\substack{p_1+\cdots+p_k=m\\ (p_i)\in\{2,3,\ldots\}^k}}a_{p_1}(x)\cdots a_{p_k}(x).
\]
(Note any term with a nonzero power of $A_M(x,\zeta)$ will be at least $O(|\zeta|^{-M-1})$.) 
Summing over all $k$, there are $2^{m-1}$ possible sequences\footnote{This is just the number of compositions of $m$. Also, since there are no powers of $(2i\zeta)^{-1}$, we actually have far fewer elements to sum over, but the $2^{m-1}$ bound will be sufficient.} 
$(p_i)\in\N^k$ satisfying $\sum p_i=m$.
Each product $a_{p_1}(x)\cdots a_{p_k}(x)$ is $O(x^{-1-(m-1)(\rho-1)})$ regardless of $k$ since $\rho-1\le1$. 
Thus for each $2\le m\le M$ and $x\ge0$, there is some $\tilde{c}_M$ so that
\[
|\alpha_m(x)|\le\tilde{c}_M2^m(1+x)^{-1-(m-1)(\rho-1)},
\]
as claimed.

It remains to show the error estimate \eqref{eqn:error}. 
We have two types of error terms: terms with a nonzero power of $A_M(x,\zeta)$, and the remaining higher order terms with $(2i\zeta)^{-M-j}$, $j\ge1$. We deal with the higher order terms quickly; note that the coefficient on $(2i\zeta)^{-M-j}$ can be estimated in a similar way as $\alpha_m(x)$. By \eqref{eqn:a-m2} and since there are only $M-1$ different $a_m(x)$'s and at most $\lfloor (M+j)/2\rfloor\le M+j$ of the $a_m(x)$ terms in each product, we get the error is bounded by 
\begin{align*}
\sum_{j=1}^\infty 2^{M+j}\max(1,c_2^{M+j})(1+x)^{-1-(M+j-1)(\rho-1)}|\zeta|^{-M-j}& 
\le c_M|\zeta|^{-M-1}(1+x)^{-1-M(\rho-1)},
\end{align*}
which agrees with \eqref{eqn:error}. 

For the terms with powers of $A_M(x,\zeta)$, we first have
\[
|A_M(x,\zeta)|+|A_M(x,\zeta)|^2+\cdots\lesssim C_M|\zeta|^{-M-1}(1+|x|)^{-\rho-M+1},\quad|\zeta|\to\infty.
\]
The other terms are mixed with powers $(2i\zeta)^{-m}$ and their coefficients. For a given $m$ and composition $(p_i)$ summing to $m$, we have the mixed terms
\[
a_{p_1}(x)\cdots a_{p_k}(x)(2i\zeta)^{-m}\left[\pm\binom{k+1}{k}A_M(x,\zeta)\mp\binom{k+2}{k}A_M(x,\zeta)^2\pm\binom{k+3}{k}A_M(x,\zeta)^3\mp\cdots\right].
\]
The maximum possible binomial coefficient on any $A_M(x,\zeta)^j$ is 
\[
\binom{\lfloor m/2\rfloor+j}{\lfloor m/2\rfloor}=\binom{\lfloor m/2\rfloor+j}{j}\le\binom{m+j}{j}\le\frac{(m+j)^j}{j!}\le \frac{e^j}{\sqrt{2\pi j}}(1+\frac{m}{j})^j\le e^je^m.
\]
Summing over all compositions of $m$ and then over all $m$, we get the error bound
\begin{align*}
\sum_{m=2}^\infty\sum_{j=1}^\infty2^{m-1}&|2i\zeta|^{-m}c_2^m(1+x)^{-1-(m-1)(\rho-1)}|A_M(x,\zeta)|^je^je^m\\
&\lesssim C_M|\zeta|^{-M-1}(1+x)^{-\rho-M+1}(1+x)^{-2+\rho}\sum_{m=2}^\infty (c_2e|\zeta|^{-1}(1+x)^{-(\rho-1)})^m,\quad|\zeta|\to\infty\\
&\lesssim C_M|\zeta|^{-M-3}(1+x)^{-2\rho-M+1}\le C_M|\zeta|^{-M-1}(1+x)^{-1-M(\rho-1)},\quad|\zeta|\to\infty,
\end{align*}
which implies \eqref{eqn:error}.


\end{proof}

\begin{lemma}\label{lem:B}
There is the asymptotic expansion $B(x,\zeta)=A'(x,\zeta)A(x,\zeta)^{-1}=\sum_{m=1}^Mb_m(x)(2i\zeta)^{-m}+B_M(x,\zeta)$, with remainder estimate for all $x\ge0$ and $\Im\zeta\ge0$, $|\zeta|\ge d>0$,
\begin{equation}
|B_M(x,\zeta)|\le c_M|\zeta|^{-M-1}(1+x)^{-\rho-M(\rho-1)}.
\end{equation}
\end{lemma}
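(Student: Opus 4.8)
The plan is to obtain $B=A'A^{-1}$ by \emph{multiplying} the asymptotic expansion of $A'(x,\zeta)$ by the expansion of $A(x,\zeta)^{-1}$ furnished by Lemma~\ref{lem:A}, and then reading off the coefficients and the remainder. Differentiating the expansion of $A$ gives $A'(x,\zeta)=\sum_{m=2}^M a_m'(x)(2i\zeta)^{-m}+\partial_x A_M(x,\zeta)$, where by \eqref{eqn:a-m} (with $j=1$) one has $|a_m'(x)|\lesssim(1+x)^{-2-(m-1)(\rho-1)}$ for $2\le m\le M$, and $\partial_x A_M$ obeys \eqref{eqn:partial}. Writing $A^{-1}=1+\sum_{q=2}^M\alpha_q(x)(2i\zeta)^{-q}+\alpha_M(x,\zeta)$ with the bounds \eqref{eqn:error} and \eqref{eqn:alpha-m}, the product has at each order $m\le M$ the coefficient $b_m(x)=\sum_{p+q=m}a_p'(x)\alpha_q(x)$ (with the convention $\alpha_0\equiv1$); this agrees with the $b_m$ produced by the logarithmic-derivative recursion, as one checks from $BA=A'$. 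Everything of order $\ge M+1$ is collected into $B_M(x,\zeta)$.

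It remains to bound $B_M$, which I would split into three groups: (a) the piece $\partial_x A_M(x,\zeta)\,A^{-1}(x,\zeta)$; (b) the piece $\big(\sum_{m=2}^M a_m'(x)(2i\zeta)^{-m}\big)\,\alpha_M(x,\zeta)$; and (c) the genuine cross terms $a_p'(x)\alpha_q(x)(2i\zeta)^{-(p+q)}$ with $2\le p,q\le M$ and $p+q\ge M+1$. For group (a) I would use that, for $|\zeta|\ge d$ large, $A^{-1}=1+O(|\zeta|^{-2})$ uniformly in $x$ (again from Lemma~\ref{lem:A}), so $|A^{-1}|\le 2$, and then \eqref{eqn:partial} yields the bound $\lesssim|\zeta|^{-M-1}(1+x)^{-\rho-M}$. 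Group (b) is the easiest: every term carries a factor $|\zeta|^{-m}$ with $m\ge2$ on top of the $|\zeta|^{-M-1}$ from $\alpha_M$, so it is $O(|\zeta|^{-M-3})$, its $x$-decay comfortably exceeding the target. For group (c), the product $a_p'\alpha_q$ decays like $(1+x)^{-3-(p+q-2)(\rho-1)}$; summing over the compositions of each order contributes the combinatorial factor already controlled in Lemma~\ref{lem:A}, and the resulting series in $|\zeta|^{-1}$ converges for $|\zeta|\ge d$, the slowest-decaying contribution coming from $p+q=M+1$.

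The main obstacle is verifying that \emph{every} remainder term decays in $x$ at least as fast as the target exponent $-\rho-M(\rho-1)$, and this is exactly where the hypothesis $\rho\in(1,2]$ enters decisively. For group (a), the inequality $-\rho-M\le-\rho-M(\rho-1)$ holds because $\rho-1\le1$. For the critical group (c) at order $m=M+1$, the exponent is $-3-(M-1)(\rho-1)$, and the requirement $-3-(M-1)(\rho-1)\le-\rho-M(\rho-1)$ reduces precisely to $2\rho-4\le0$, i.e.\ $\rho\le2$; the higher orders $p+q>M+1$ decay strictly faster while carrying extra, harmless powers of $|\zeta|^{-1}$. Collecting the three bounds and taking $c_M$ to be their maximum then yields $|B_M(x,\zeta)|\le c_M|\zeta|^{-M-1}(1+x)^{-\rho-M(\rho-1)}$ for all $x\ge0$ and $\Im\zeta\ge0$, $|\zeta|\ge d$, as claimed.
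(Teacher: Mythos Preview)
Your proposal is correct and follows essentially the same route as the paper: write $A'(x,\zeta)=\sum_{m=2}^M a_m'(x)(2i\zeta)^{-m}+\partial_x A_M(x,\zeta)$, multiply by the expansion of $A^{-1}$ from Lemma~\ref{lem:A}, and control the remainder using \eqref{eqn:partial}, \eqref{eqn:a-m} with $j=1$, and \eqref{eqn:error}--\eqref{eqn:alpha-m}. Your three-group decomposition (a)--(c) and the explicit check that $\rho\le2$ secures the $x$-decay exponent simply make explicit what the paper leaves to the reader; one minor remark is that group (c) is a \emph{finite} sum (since $2\le p,q\le M$), so no convergence or combinatorial summation issue actually arises there.
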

As a result,
\begin{align*}
|C_M(x,\zeta)|&=\left|\int_x^\infty B(y,\zeta)-\sum_{m=1}^M\int_x^\infty b_m(y)\,dy\,(2i\zeta)^{-m}\right|\le\int_x^\infty|B_M(y,\zeta)|\,dy\lesssim_M|\zeta|^{-M-1},
\end{align*}
so the expansion for $\log A(x,\zeta)$ obtained in \eqref{eqn:logA} is indeed a valid asymptotic expansion.

\begin{proof}[Proof (of Lemma~\ref{lem:B})]
The asymptotic expansion of order $M$ for $A'(x,\zeta)$ is just
\[
A'(x,\zeta)=\sum_{m=2}^M a_m'(x)(2i\zeta)^{-m}+\frac{\partial A_M(x,\zeta)}{\partial x}.
\]
Multiply the asymptotic expansions of order $M$ for $A'(x,\zeta)$ and $A(x,\zeta)^{-1}$, then use \eqref{eqn:partial}, \eqref{eqn:a-m} with $j=1$ (adapted for $x\ge0$ like in \eqref{eqn:a-m2}), and Lemma~\ref{lem:A} equations \eqref{eqn:error} and \eqref{eqn:alpha-m} to deal with the error terms.
\end{proof}

\end{document}